\IfFileExists{IEEEtran.cls}{
  \documentclass[conference]{IEEEtran}
}{
  \documentclass{paperreview}
}
\IEEEoverridecommandlockouts

\usepackage{cite}
\usepackage{amsmath,amssymb,amsfonts,amsthm,mathtools,bm}
\usepackage{graphicx}
\usepackage{textcomp}
\usepackage{xcolor}
\usepackage{booktabs}
\usepackage{array}
\usepackage{enumitem}
\usepackage{microtype}
\usepackage{placeins}
\usepackage{url}
\usepackage{hyperref}
\hypersetup{colorlinks=true,linkcolor=blue,citecolor=blue,urlcolor=blue}
\newtheorem{theorem}{Theorem}
\newtheorem{lemma}{Lemma}
\newtheorem{proposition}{Proposition}
\newtheorem{corollary}{Corollary}
\theoremstyle{remark}

\newcommand{\E}{\mathbb E}
\newcommand{\R}{\mathbb R}
\newcommand{\cW}{\mathcal W}
\newcommand{\gen}{\operatorname{gen}}
\newcommand{\hres}{\mathsf h}
\newcommand{\hb}{h_2}

\begin{document}

\title{Residual-Information Generalization Bounds\\
Under Selective Privacy: Extended Version}

\author{%
\begin{tabular}{ccc}
\begin{minipage}[t]{0.31\textwidth}
\centering
Amirreza Zamani\\
\textit{Department of Information Science and Engineering}\\
\textit{KTH Royal Institute of Technology}\\
Stockholm, Sweden\\
amizam@kth.se
\end{minipage}
&
\begin{minipage}[t]{0.31\textwidth}
\centering
Parastoo Sadeghi\\
\textit{School of Engineering and Technology}\\
\textit{UNSW Canberra}\\
Canberra, Australia\\
p.sadeghi@unsw.edu.au
\end{minipage}
&
\begin{minipage}[t]{0.31\textwidth}
\centering
Mikael Skoglund\\
\textit{Department of Information Science and Engineering}\\
\textit{KTH Royal Institute of Technology}\\
Stockholm, Sweden\\
skoglund@kth.se
\end{minipage}
\end{tabular}}

\maketitle
\ifdefined\paperreviewmode\thispagestyle{review}\fi

\begin{abstract}
Information-theoretic generalization bounds usually control dependence on the task sample, whereas selective privacy may constrain only a different, private coordinate. We study i.i.d. pairs $(X,Y)$, where a learner observes $(X^n,Y^n)$, leaks at most $\epsilon_n$ nats about $X^n$, and incurs loss on $Y$. A chain-rule transfer converts this mismatched constraint into the closed-form universal certificate
$I(W;Y^n)\leq\min\{nH(Y),nH(Y\mid X)+\epsilon_n\}$.
Combining the certificate with mutual-information generalization yields a bound governed by residual task uncertainty and normalized privacy leakage. We derive complementary task-compression and individual-sample refinements, identify a sharp obstruction showing why selective privacy alone need not give a vanishing certificate, and invert the bound to obtain the exact leakage allowance for a prescribed generalization tolerance. We prove exact information-level achievability for deterministic-task and independent-residual source families and show that the resulting $\sqrt{\epsilon_n/n}$ generalization rate is tight up to a universal constant. We also give a fixed-horizon adaptive application in which conditional leakage costs add.

\end{abstract}

\begin{IEEEkeywords}
Generalization bounds, mutual information, selective privacy, adaptive learning, information theory.
\end{IEEEkeywords}

\section{Introduction}

Information-theoretic generalization relates the dependence of a learned hypothesis on its training sample to the gap between empirical and population risk. For a sub-Gaussian loss, a standard estimate scales as $\sqrt{I(W;Z^n)/n}$~\cite{russo2016,xu2017}. Coordinatewise information can sharpen this estimate when a learner depends unevenly on different records~\cite{bu2020}. These principles are also central to adaptive data analysis and reusable holdout guarantees~\cite{dwork2015,bassily2016,steinke2020}.

The information controlled by a privacy requirement need not be the information needed by the risk bound. Consider one observation $Z=(X,Y)$. The coordinate $X$ is private, the coordinate $Y$ is task-relevant, and a randomized learner produces $W$ after observing $(X^n,Y^n)$. Selective information privacy imposes
\begin{equation}
 I(W;X^n)\leq\epsilon_n,                                      \label{eq:intro-privacy}
\end{equation}
while a loss evaluated on $Y$ is naturally controlled by $I(W;Y^n)$. Without a Markov restriction, neither data processing nor ordinary privacy generalization permits replacing the latter quantity by the former.

The missing term is the uncertainty in $Y$ that remains after $X$ is known. The chain rule gives
\begin{equation}
 I(W;Y^n)\leq I(W;X^n)+H(Y^n\mid X^n).                       \label{eq:intro-chain}
\end{equation}
For i.i.d. data, this produces the per-sample residual $H(Y\mid X)$. This term has a precise interpretation: privacy of $X^n$ cannot prevent a learner from retaining task randomness independent of $X^n$. The entropy cap $I(W;Y^n)\leq nH(Y)$ completes the certificate.

Our contributions are as follows. First, we combine this transfer argument with mutual-information generalization to obtain a closed-form selective-privacy bound. Second, we show that task compression removes the residual floor and that coordinatewise leakage yields an individual-sample refinement. Third, we establish construction-based tightness at two distinct levels: the transfer certificate is exactly attainable for deterministic-task and independent-residual source families, while a noisy-majority learner attains the generalization bound's $\sqrt{\epsilon_n/n}$ dependence up to a universal constant. Fourth, we derive the exact finite-sample leakage budget that certifies a target gap and record a deliberately restricted adaptive application for a fixed terminal iterate. %General checkpoint selection, stopping rules, and heterogeneous data arrivals remain outside the present scope.

The privacy notion in~\eqref{eq:intro-privacy} limits the mutual-information leakage. %and is not a differential-privacy guarantee.
Privacy-funnel mechanism design also studies retained utility under information leakage~\cite{makhdoumi2014,zamani2024}; here our objective is a learner-independent generalization certificate rather than an extremal privacy--utility function.

\section{Model and Information Transfer}

Let $(X,Y)\sim P_{XY}$ take values in finite alphabets, and let $(X_i,Y_i)_{i=1}^n$ be i.i.d. A randomized learner is any channel $P_{W\mid X^n,Y^n}$. For a loss $\ell:\cW\times\mathcal Y\to\R$, define
\begin{align}
 L_P(w)&:=\E_{Y\sim P_Y}[\ell(w,Y)],\nonumber\\
 L_n(w)&:=\frac1n\sum_{i=1}^n\ell(w,Y_i),\nonumber\\
 \gen(W,Y^n)&:=L_P(W)-L_n(W).                         \label{eq:risk}
\end{align}
For each fixed $w$, assume that $\ell(w,Y)-L_P(w)$ is $\sigma^2$-sub-Gaussian under $P_Y$. A loss in $[a,b]$ satisfies this assumption with $\sigma^2=(b-a)^2/4$.

The generalization module used below is the expected version of the mutual-information bound~\cite{xu2017}. It applies to any joint law of $(W,Y^n)$ generated from an i.i.d. task sample.

\begin{lemma}                                                        \label{lem:mi-gen}
Under the stated sub-Gaussian assumption,
\begin{equation}
 \left|\E[\gen(W,Y^n)]\right|
 \leq\sqrt{\frac{2\sigma^2}{n}I(W;Y^n)}.              \label{eq:mi-gen}
\end{equation}
\end{lemma}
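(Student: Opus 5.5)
This is the Xu--Raginsky bound, and I will prove it by the Donsker--Varadhan change-of-measure argument applied to the pair $(W,Y^n)$. Write $P_{W,Y^n}$ for the joint law produced by the learner and $P_W\otimes P_{Y^n}$ for the product of its marginals, where $P_{Y^n}=P_Y^{\otimes n}$ because the task sample is i.i.d. Define
\[
f(w,y^n):=L_P(w)-\frac1n\sum_{i=1}^n\ell(w,y_i),
\]
so that $\E_{P_{W,Y^n}}[f]=\E[\gen(W,Y^n)]$. Let $(\bar W,\bar Y^n)$ be drawn independently from the product measure. Then $\bar W$ is independent of $\bar Y^n$, and for each fixed $w$ we have $\E[f(w,\bar Y^n)]=0$, hence $\E_{P_W\otimes P_{Y^n}}[f]=0$. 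If $I(W;Y^n)=\infty$ there is nothing to prove, so assume it is finite.

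\textbf{Sub-Gaussianity under the product measure.} Fix $w$. The variables $\ell(w,\bar Y_i)-L_P(w)$ are i.i.d., zero mean and $\sigma^2$-sub-Gaussian. Their average is therefore $\sigma^2/n$-sub-Gaussian, since the moment generating function factorizes:
\[
\E\bigl[e^{\lambda f(w,\bar Y^n)}\bigr]\le e^{\lambda^2\sigma^2/(2n)}\quad\text{for all }\lambda\in\R.
\]
Averaging over $\bar W\sim P_W$ keeps this bound, so $\log\E_{P_W\otimes P_{Y^n}}[e^{\lambda f}]\le \lambda^2\sigma^2/(2n)$. This is the step that needs care: the sub-Gaussian assumption is stated pointwise in $w$, and it transfers to the product measure only because $\bar W$ is independent of $\bar Y^n$. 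That independence is exactly why we change measure to $P_W\otimes P_{Y^n}$ rather than work under the joint law.

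\textbf{Donsker--Varadhan and optimization over $\lambda$.} The Donsker--Varadhan variational formula gives, for every $\lambda\in\R$,
\[
D(P_{W,Y^n}\|P_W\otimes P_{Y^n})\ge \lambda\,\E_{P_{W,Y^n}}[f]-\log\E_{P_W\otimes P_{Y^n}}[e^{\lambda f}].
\]
The left-hand side equals $I(W;Y^n)$. Combining this with the previous display yields
\[
\lambda\,\E[\gen(W,Y^n)]-\frac{\lambda^2\sigma^2}{2n}\le I(W;Y^n)\quad\text{for all }\lambda\in\R.
\]
This is a nonnegative quadratic condition in $\lambda$, so its discriminant must satisfy $\E[\gen]^2\le 2\sigma^2 I(W;Y^n)/n$. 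Taking square roots gives \eqref{eq:mi-gen}. Since $\lambda$ may take either sign, the bound holds for the absolute value. The finite alphabet of $Y$ keeps all expectations finite; for general $\mathcal W$ one only needs $\ell$ measurable so that $f$ is integrable under both laws.
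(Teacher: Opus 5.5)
Your proof is correct and follows the paper's Appendix~A argument: the same centered function, the same $\sigma^2/n$ moment-generating-function bound under the product law $P_WP_{Y^n}$ obtained pointwise in $w$ and then averaged over $W$, and the same Donsker--Varadhan step. The only cosmetic difference is that you use a discriminant condition over all real $\lambda$ instead of optimizing over $\lambda>0$ and repeating the argument for $-F$, which also covers the case $I(W;Y^n)=0$ without the paper's separate remark.
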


For completeness, the mechanism behind Lemma~\ref{lem:mi-gen} is a change of measure. Under the decoupled law $P_WP_{Y^n}$, the centered empirical average
\begin{equation}
 F(W,Y^n):=L_n(W)-L_P(W)                              \label{eq:centered-average}
\end{equation}
is $\sigma^2/n$-sub-Gaussian. The variational representation of relative entropy therefore gives, for every $\lambda>0$,
\begin{align}
 \E_{P_{WY^n}}[F]
 &\leq \frac{D(P_{WY^n}\|P_WP_{Y^n})}{\lambda}
      +\frac{\lambda\sigma^2}{2n}\nonumber\\
 &=\frac{I(W;Y^n)}{\lambda}+\frac{\lambda\sigma^2}{2n}.             \label{eq:dv-step}
\end{align}
Optimizing $\lambda$ and repeating the argument for $-F$ yields~\eqref{eq:mi-gen}. This proof also shows why the dependence measure must involve the coordinate used by the loss: the decoupled reference law replaces $Y^n$, not an unrelated private block.

To connect selective privacy to Lemma~\ref{lem:mi-gen}, define the residual-information certificate
\begin{equation}
 \hres_\epsilon(P_{XY}):=\min\{H(Y),H(Y\mid X)+\epsilon\}.          \label{eq:h-def}
\end{equation}
This is an upper-bound function, not an operational supremum. It is nondecreasing, continuous, concave, and one-Lipschitz in $\epsilon$. It begins at $H(Y\mid X)$ and saturates at $H(Y)$ when $\epsilon\geq I(X;Y)$.

These structural properties can be summarized exactly. They are useful when comparing privacy budgets or averaging recordwise certificates.

\begin{proposition}                                                   \label{prop:h-properties}
For fixed $P_{XY}$, the map $\epsilon\mapsto\hres_\epsilon(P_{XY})$ is continuous, nondecreasing, concave, and one-Lipschitz. Moreover,
\begin{align}
 \hres_0(P_{XY})&=H(Y\mid X),\nonumber\\
 \hres_\epsilon(P_{XY})&=H(Y)\quad\text{if }\epsilon\geq I(X;Y).    \label{eq:h-properties}
\end{align}
\end{proposition}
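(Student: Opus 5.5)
The plan is to work directly from the definition \eqref{eq:h-def}. There, $\hres_\epsilon(P_{XY})$ is introduced as the pointwise minimum of two explicit functions of $\epsilon$, not as a supremum over learners; the paper stresses this when it calls it ``an upper-bound function, not an operational supremum.'' For fixed $P_{XY}$ on finite alphabets, set
\begin{equation*}
 f_1(\epsilon):=H(Y),\qquad f_2(\epsilon):=H(Y\mid X)+\epsilon,\qquad \epsilon\ge 0 .
\end{equation*}
Both are finite real numbers because the alphabets are finite. Then $\hres_\epsilon(P_{XY})=\min\{f_1(\epsilon),f_2(\epsilon)\}$. The only information-theoretic input I need is the identity $H(Y)-H(Y\mid X)=I(X;Y)$ together with $I(X;Y)\ge 0$.

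For the four structural properties, I will use that $f_1$ is constant (slope $0$) and $f_2$ is affine with slope $1$. Both are concave and continuous, and a pointwise minimum of finitely many concave continuous functions is again concave and continuous. Both are nondecreasing, so their minimum is nondecreasing. For the Lipschitz bound I will use the elementary inequality $|\min\{a,b\}-\min\{a',b'\}|\le\max\{|a-a'|,|b-b'|\}$. With $a=a'=H(Y)$, $b=H(Y\mid X)+\epsilon$, and $b'=H(Y\mid X)+\epsilon'$, this gives $|\hres_\epsilon-\hres_{\epsilon'}|\le|\epsilon-\epsilon'|$. I will also note that Lipschitz continuity already implies continuity, so continuity needs no separate argument.

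For the endpoint identities in \eqref{eq:h-properties}, I will use $I(X;Y)\ge 0$ to get $H(Y\mid X)\le H(Y)$. At $\epsilon=0$ this makes the minimum equal to $H(Y\mid X)$. If $\epsilon\ge I(X;Y)$, then $H(Y\mid X)+\epsilon\ge H(Y\mid X)+I(X;Y)=H(Y)$, so the minimum equals $H(Y)$. Combined with monotonicity, this describes the whole profile: $\hres_\epsilon=H(Y\mid X)+\epsilon$ on $[0,I(X;Y)]$ and $\hres_\epsilon=H(Y)$ afterwards, with a single kink at $\epsilon=I(X;Y)$.

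There is no genuine obstacle in this argument. The one point to get right is the domain: the statement should be read for $\epsilon\ge 0$, matching the privacy budget in \eqref{eq:intro-privacy}. It is the sign $I(X;Y)\ge 0$ that orders the two branches correctly at $\epsilon=0$; without it, the claimed value of $\hres_0$ could fail.
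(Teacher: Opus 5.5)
Your proof is correct and is essentially the paper's argument. The paper rewrites $\hres_\epsilon(P_{XY})=H(Y\mid X)+\min\{\epsilon,I(X;Y)\}$ and reads off all four properties from the slopes $1$ then $0$, whereas you get them from the fact that a pointwise minimum of a constant and a slope-one affine map inherits concavity, monotonicity, and the $1$-Lipschitz property; both proofs use $H(Y)=H(Y\mid X)+I(X;Y)$ with $I(X;Y)\geq 0$ for the endpoint values.
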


\emph{Proof:}
The function equals the affine map $H(Y\mid X)+\epsilon$ until that map meets the constant $H(Y)$. The meeting point is $I(X;Y)$. Its slope is therefore one before saturation and zero afterward. \hfill$\square$

The next lemma is the bridge between the constrained coordinate and the task coordinate. It holds without a Markov condition on $(U,X,Y)$.

\begin{lemma}                                                        \label{lem:transfer}
For every jointly distributed $(U,X,Y)$ with finite-entropy $Y$,
\begin{equation}
 I(U;X)\leq\epsilon
 \quad\Longrightarrow\quad
 I(U;Y)\leq\hres_\epsilon(P_{XY}).                   \label{eq:transfer}
\end{equation}
The same statement holds for arbitrary blocks after replacing $(X,Y,\epsilon)$ by $(X^n,Y^n,\epsilon_n)$.
\end{lemma}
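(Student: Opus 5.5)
We prove the two bounds that make up the minimum in $\hres_\epsilon(P_{XY})$ separately and then combine them. Both bounds come from elementary entropy identities, so no Markov structure on $(U,X,Y)$ is needed.

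\emph{Entropy cap.} Since $Y$ has finite entropy, we have $I(U;Y)=H(Y)-H(Y\mid U)\leq H(Y)$ because $H(Y\mid U)\geq 0$. For finite alphabets this is immediate. For a general $U$ we would read $I(U;Y)$ as $H(Y)-H(Y\mid U)$ with the conditional entropy averaged over $P_U$, which is still nonnegative for discrete $Y$.

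\emph{Chain-rule transfer.} We expand $I(U;X,Y)$ in two ways:
\begin{equation*}
 I(U;Y)+I(U;X\mid Y)=I(U;X,Y)=I(U;X)+I(U;Y\mid X).
\end{equation*}
Since $I(U;X\mid Y)\geq 0$, this gives $I(U;Y)\leq I(U;X)+I(U;Y\mid X)$. Next,
\begin{equation*}
 I(U;Y\mid X)=H(Y\mid X)-H(Y\mid X,U)\leq H(Y\mid X),
\end{equation*}
and $H(Y\mid X)\le H(Y)<\infty$, so every term is finite. Therefore $I(U;Y)\leq \epsilon+H(Y\mid X)$ whenever $I(U;X)\le\epsilon$. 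Taking the minimum of this bound and the entropy cap gives \eqref{eq:transfer}. The only delicate point is making sure no expression of the form $\infty-\infty$ appears when $U$ or $X$ are not finite. We avoid this by using the two-sided chain rule for mutual information, which holds for arbitrary random variables. We write entropies only for the discrete, finite-entropy variable $Y$, and we use the bound $I(U;Y\mid X)\le H(Y\mid X)$ directly.

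\emph{Block version.} The argument never uses that $X$ and $Y$ are single letters, only that $Y$ has finite entropy. We therefore apply it verbatim to $(U,X^n,Y^n)$ with budget $\epsilon_n$. This yields $I(U;Y^n)\le\min\{H(Y^n),H(Y^n\mid X^n)+\epsilon_n\}$. This bound is stated in block form and does not require i.i.d.\ data. Under the i.i.d.\ model it simplifies to $\min\{nH(Y),nH(Y\mid X)+\epsilon_n\}$.
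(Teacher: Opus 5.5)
Your proof is correct and follows the paper's argument essentially step for step. It expands $I(U;X,Y)$ in two orders, drops $I(U;X\mid Y)\geq 0$, bounds $I(U;Y\mid X)\leq H(Y\mid X)$, intersects with the cap $I(U;Y)\leq H(Y)$, and reuses the argument verbatim for blocks; your added care about finiteness (avoiding $\infty-\infty$ for general $U$) harmlessly goes beyond the paper's finite-alphabet setting.
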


\emph{Proof:}
The chain rule yields
$I(U;Y)=I(U;X)+I(U;Y\mid X)-I(U;X\mid Y)$.
Discarding the last nonnegative term and bounding $I(U;Y\mid X)$ by $H(Y\mid X)$ gives $I(U;Y)\leq H(Y\mid X)+\epsilon$. Intersect this inequality with $I(U;Y)\leq H(Y)$. \hfill$\square$

The block implication does not require independent coordinates; only the later simplification $H(Y^n\mid X^n)=nH(Y\mid X)$ uses the i.i.d. law. It is also stable under post-processing. If $W'$ is obtained from $W$ and randomness independent of the sample given $W$, then data processing cannot increase either $I(W;X^n)$ or $I(W;Y^n)$. Thus the same certificate applies to quantization or randomized release of a trained model.

It is important that selective privacy does not authorize a full-sample substitution. The exact identity
\begin{equation}
 I(W;X^n,Y^n)=I(W;X^n)+I(W;Y^n\mid X^n)              \label{eq:full-sample-identity}
\end{equation}
contains a conditional task term that can be as large as $H(Y^n\mid X^n)$. Consequently, inserting $\epsilon_n$ for $I(W;X^n,Y^n)$ in a classical full-sample bound is generally invalid. Lemma~\ref{lem:transfer} controls the right object without making that substitution.

\section{Selective-Privacy Generalization}

The learner is called $(X^n,\epsilon_n)$-selectively private when~\eqref{eq:intro-privacy} holds. The constraint controls neither the full sample nor $Y^n$ directly. Applying Lemma~\ref{lem:transfer} at blocklength $n$ first quantifies the task information that any such learner can retain; Lemma~\ref{lem:mi-gen} then turns it into a risk certificate.

\begin{theorem}                                                       \label{thm:main}
Every $(X^n,\epsilon_n)$-selectively private learner satisfies
\begin{align}
 I(W;Y^n)&\leq\min\{nH(Y),nH(Y\mid X)+\epsilon_n\},                 \label{eq:main-info}\\
 \left|\E[\gen(W,Y^n)]\right|
 &\leq\sqrt{2\sigma^2\min\left\{H(Y),H(Y\mid X)+\frac{\epsilon_n}{n}\right\}}. \label{eq:main-gen}
\end{align}
\end{theorem}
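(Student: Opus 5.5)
The plan is to obtain \eqref{eq:main-info} from the block form of Lemma~\ref{lem:transfer}, and then to feed it into Lemma~\ref{lem:mi-gen} to get \eqref{eq:main-gen}.

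\textbf{Step 1 (information certificate).} Apply Lemma~\ref{lem:transfer} with $U=W$ and the blocks $(X^n,Y^n,\epsilon_n)$. Since the learner is $(X^n,\epsilon_n)$-selectively private, we have $I(W;X^n)\leq\epsilon_n$. The block version of the lemma then gives
\begin{equation*}
 I(W;Y^n)\leq\min\{H(Y^n),\,H(Y^n\mid X^n)+\epsilon_n\}.
\end{equation*}
No Markov structure on $(W,X^n,Y^n)$ is needed, because the lemma holds for an arbitrary joint law. It only remains to evaluate the block entropies. The pairs $(X_i,Y_i)$ are i.i.d., so $H(Y^n)=nH(Y)$. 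For the conditional term, the chain rule together with the fact that $(X_i,Y_i)$ is independent of the other pairs gives $H(Y^n\mid X^n)=\sum_i H(Y_i\mid Y^{i-1},X^n)=\sum_i H(Y_i\mid X_i)=nH(Y\mid X)$. Both entropies are finite because the alphabets are finite. Substituting these values yields \eqref{eq:main-info}.

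\textbf{Step 2 (generalization).} Lemma~\ref{lem:mi-gen} applies here. The loss depends only on $Y$, the task sample $Y^n$ is i.i.d.\ from $P_Y$, and the lemma allows any joint law of $(W,Y^n)$, in particular the one induced by the channel $P_{W\mid X^n,Y^n}$ after marginalizing out $X^n$. This gives $|\E[\gen(W,Y^n)]|\leq\sqrt{2\sigma^2 I(W;Y^n)/n}$. Since the square root is monotone, we may insert \eqref{eq:main-info}. Dividing inside the minimum by $n$ then gives $\min\{H(Y),H(Y\mid X)+\epsilon_n/n\}$, which is exactly \eqref{eq:main-gen}. Equivalently, the right-hand side is $\sqrt{2\sigma^2\hres_{\epsilon_n/n}(P_{XY})}$.

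\textbf{Main point of care.} The step needing attention is the i.i.d.\ evaluation $H(Y^n\mid X^n)=nH(Y\mid X)$. It uses independence across pairs rather than any property of the learner, so the conditioning on all of $X^n$ must be correctly reduced to conditioning on $X_i$ alone. A secondary check is that Lemma~\ref{lem:mi-gen} is stated for the marginal law of $(W,Y^n)$. This holds because the sub-Gaussian assumption concerns $Y$ under $P_Y$, and $Y^n$ remains i.i.d.\ $P_Y$ regardless of how $W$ also depends on $X^n$.
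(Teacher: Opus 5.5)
Your proposal is correct and follows the paper's proof essentially verbatim: the block form of Lemma~\ref{lem:transfer} applied to $(W,X^n,Y^n)$, the product-law evaluations $H(Y^n)=nH(Y)$ and $H(Y^n\mid X^n)=nH(Y\mid X)$, and substitution into Lemma~\ref{lem:mi-gen}. Your explicit chain-rule justification of $H(Y^n\mid X^n)=nH(Y\mid X)$ is a useful detail that the paper leaves implicit.
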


\emph{Proof:}
Lemma~\ref{lem:transfer} and memorylessness give~\eqref{eq:main-info}. Substitution in~\eqref{eq:mi-gen} proves~\eqref{eq:main-gen}. \hfill$\square$

The entropy cap is active after the normalized privacy budget reaches $I(X;Y)$. Below that threshold, the certificate separates into the residual $H(Y\mid X)$ and the privacy cost $\epsilon_n/n$. If $Y=f(X)$, then $H(Y\mid X)=0$ and any $\epsilon_n=o(n)$ produces a vanishing bound. If the residual is positive and fixed, selective privacy alone does not certify convergence. This is a limitation of the available information constraint, not a lower bound on the generalization error of every learner.

The residual cannot be deleted from a universal task-information certificate. The following construction exposes the obstruction without invoking an extremal representation theorem.

\begin{proposition}                                                   \label{prop:obstruction}
Let $Y=(X,N)$, where $N$ is independent of $X$ and has finite entropy. The release $W=N^n$ satisfies $I(W;X^n)=0$ and
\begin{equation}
 I(W;Y^n)=nH(N)=nH(Y\mid X).                           \label{eq:obstruction}
\end{equation}
Consequently, no bound based only on $I(W;X^n)\leq\epsilon_n$ can replace $H(Y\mid X)$ by a uniformly smaller distribution-free term.
\end{proposition}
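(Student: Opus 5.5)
The plan is to verify the two information identities directly from the i.i.d. structure and the independence of $N$ from $X$, and then read off the consequence by comparing with Theorem~\ref{thm:main}.

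\textbf{Privacy claim.} Since $(X_i,N_i)$ are i.i.d. with $N_i$ independent of $X_i$, the whole block $N^n$ is independent of $X^n$. The release $W=N^n$ is a deterministic function of $N^n$, so $W$ is independent of $X^n$. Hence $I(W;X^n)=0$, and $W$ is $(X^n,\epsilon_n)$-selectively private for every $\epsilon_n\ge 0$.

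\textbf{Task information.} Because $Y^n=(X^n,N^n)$ determines $W$, we have $H(W\mid Y^n)=0$. Therefore
\[
I(W;Y^n)=H(W)=H(N^n)=nH(N),
\]
where the last step uses that the $N_i$ are i.i.d. with finite entropy. For the residual, $Y=(X,N)$ gives
\[
H(Y\mid X)=H(X,N\mid X)=H(N\mid X)=H(N),
\]
again by independence. This establishes \eqref{eq:obstruction}. Note that the upper bound \eqref{eq:main-info} is attained with $\epsilon_n=0$, since $nH(Y\mid X)\le nH(Y)$.

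\textbf{Consequence.} Suppose a bound of the form $I(W;Y^n)\le n\,g(P_{XY})+\epsilon_n$, or any certificate depending on the learner only through the constraint $I(W;X^n)\le\epsilon_n$, held with $g(P_{XY})<H(Y\mid X)$. Consider a source of the above form; any $H(N)>0$ is available by choosing the law of $N$. Take $\epsilon_n=0$. The release $W=N^n$ satisfies the constraint, yet $I(W;Y^n)=nH(Y\mid X)>n\,g(P_{XY})$, which is a contradiction. The residual term therefore cannot be uniformly improved.

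\textbf{Main obstacle.} The computations are routine; the delicate step is phrasing the consequence precisely. The obstruction concerns the task-information certificate, not generalization error of every learner, which is consistent with the remark after Theorem~\ref{thm:main}. It must also be formulated over source families where $H(Y\mid X)$ can take arbitrary positive values, which the free choice of the law of $N$ provides.
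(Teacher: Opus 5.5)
Your proposal is correct and follows essentially the same route as the paper: independence of $N^n$ and $X^n$ gives zero leakage, the fact that $W=N^n$ is a coordinate of $Y^n$ gives $I(W;Y^n)=H(N^n)=nH(N)$, and independence gives $H(Y\mid X)=H(N)$, so the residual branch is attained at zero leakage. Your explicit contradiction argument for the ``consequently'' clause spells out what the paper leaves as a one-line remark.
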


The proposition concerns retained task information. It does not assert that every selectively private algorithm has large risk or that mutual-information upper bounds are equalities for every loss.

\subsection{Achievability and Tightness}\label{subsec:tightness}

There are two separate questions. The first is whether the information-transfer certificate can be attained; the second is whether the induced generalization bound has the correct dependence on privacy and sample size. The next two results answer both questions for explicit, nontrivial families.

\begin{theorem}[Exact information-transfer achievability]\label{thm:transfer-tight}
The upper bound in Lemma~\ref{lem:transfer} is attained in each of the following regimes.
\begin{enumerate}[label=\textnormal{(\roman*)},leftmargin=*,nosep]
\item If $Y=f(X)$ and $0<H(Y)<\infty$, then for every $\epsilon\geq0$ there exists $U$ such that
\begin{align}
 I(U;X)&\leq\epsilon,\nonumber\\
 I(U;Y)&=\min\{H(Y),\epsilon\}.                       \label{eq:det-tight}
\end{align}
\item If $Y=(X,N)$, where $N$ is independent of $X$ and both entropies are finite, then for every $\epsilon\geq0$ there exists $U$ such that
\begin{align}
 I(U;X)&\leq\epsilon,\nonumber\\
 I(U;Y)&=\min\{H(Y),H(Y\mid X)+\epsilon\}.            \label{eq:residual-tight}
\end{align}
\end{enumerate}
The corresponding block bound~\eqref{eq:main-info} is also attained for every $n$ and every $\epsilon_n\geq0$ under the i.i.d. extensions of these sources.
\end{theorem}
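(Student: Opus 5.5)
The plan is to give explicit constructions of $U$. In each regime, $U$ reveals a deterministic function of $Y$ (or of $X$) through a channel with tunable leakage, and we then check that both inequalities in Lemma~\ref{lem:transfer} hold with equality. The key device is an erasure-type release: with probability $p$ the release reveals the relevant variable, and otherwise it outputs a symbol $\ast$. This makes the leakage exactly linear in $p$, so any $\epsilon$ below the saturation point can be hit exactly. Continuity alone (intermediate value theorem on a family of channels) would be a fallback.

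\emph{Regime (i).} Since $Y=f(X)$, we have $H(Y\mid X)=0$ and $I(X;Y)=H(Y)$. Let $U=Y$ with probability $p$ and $U=\ast$ otherwise, with the coin $B$ independent of $(X,Y)$. Because $U$ is a function of $(B,Y)$, it satisfies $U - Y - X$, and also $U$ is a function of $(B,X)$. Hence $I(U;X)=I(U;Y)$, the term $I(U;X\mid Y)$ vanishes, and the erasure identity gives $I(U;Y)=pH(Y)$. Set $p=\min\{1,\epsilon/H(Y)\}$, which is well defined because $0<H(Y)<\infty$. Then $I(U;X)=I(U;Y)=\min\{H(Y),\epsilon\}\le\epsilon$, which is \eqref{eq:det-tight}.

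\emph{Regime (ii).} Here $H(Y\mid X)=H(N)$ and $I(X;Y)=H(X)$. Take $U=(N,V)$, where $V$ equals $X$ with probability $p$ and $\ast$ otherwise, the coin is independent of everything, and $V$ is independent of $N$ given $X$. Since $N\perp (X,V)$, we get $I(U;X)=I(V;X)=pH(X)$. Also $I(U;Y)=I(N,V;X,N)=H(N)+I(V;X\mid N)=H(N)+pH(X)$. Choosing $p=\min\{1,\epsilon/H(X)\}$ gives $I(U;X)\le\epsilon$ and $I(U;Y)=\min\{H(N)+H(X),H(N)+\epsilon\}=\min\{H(Y),H(Y\mid X)+\epsilon\}$. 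If $H(X)=0$, take $p=0$; the formula then still holds because $H(Y)=H(N)$.

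\emph{Block statement.} Apply the same constructions to the i.i.d. source $(X^n,Y^n)$, viewed as a single pair with $Y^n=f^{\otimes n}(X^n)$ (respectively $Y^n=(X^n,N^n)$ with $N^n\perp X^n$). The single-letter arguments used only the deterministic or independent-residual structure and finite entropies. They therefore give $I(W;Y^n)=\min\{nH(Y),\,nH(Y\mid X)+\epsilon_n\}$, using $H(Y^n)=nH(Y)$ and $H(Y^n\mid X^n)=nH(Y\mid X)$. The erasure is applied to the whole block with a single coin, which keeps the leakage exactly linear in $p$. In regime (i) we need $H(Y^n)>0$, which holds since $H(Y)>0$.

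The main obstacle is only bookkeeping. We must verify the Markov and independence structure that makes $I(U;X\mid Y)=0$ in (i) and $I(N;X,V)=0$ in (ii), since these are exactly the terms discarded in the proof of Lemma~\ref{lem:transfer}. We must also handle the degenerate cases where the saturation thresholds are zero.
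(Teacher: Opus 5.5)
Your proposal is correct and matches the paper's proof in Appendix~K essentially step for step: an erasure release of $Y$ in regime (i), or of $X$ alongside an always-revealed $N$ in regime (ii), with revelation probability $\min\{1,\epsilon/H(Y)\}$ or $\min\{1,\epsilon/H(X)\}$ respectively, and a single coin erasing the whole block for the $n$-letter statement. The only differences are cosmetic: you absorb the saturated case into $p=1$ rather than setting $U=Y$ separately, and you handle $H(X)=0$ with $p=0$ instead of $U=Y$, which yields the same information values.
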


\begin{proposition}[Rate tightness of the generalization bound]\label{prop:gen-tight}
Let $n$ be odd, let $X=Y$ be uniform on $\{-1,+1\}$, and put
$S_n=\sum_{i=1}^nY_i$ and $T_n=\operatorname{sign}(S_n)$. Given $T_n$, let the learner output $W\in\{-1,+1\}$ through a binary symmetric channel satisfying
\begin{equation}
 \Pr[W=T_n\mid T_n]=\frac{1+\eta}{2},\qquad 0\leq\eta\leq1.          \label{eq:noisy-majority}
\end{equation}
For the loss $\ell(w,y)=\bm1\{w\neq y\}$, which has sub-Gaussian proxy $\sigma^2=1/4$,
\begin{align}
 I(W;X^n)=I(W;Y^n)
 &=d(\eta),\label{eq:noisy-majority-info}\\
 \E[\gen(W,Y^n)]
 &=\frac{\eta}{2n}\E|S_n|,\label{eq:noisy-majority-gap}
\end{align}
where
\begin{equation}
 d(\eta)=\frac{1+\eta}{2}\log(1+\eta)
          +\frac{1-\eta}{2}\log(1-\eta).             \label{eq:d-eta}
\end{equation}
Consequently, along odd $n\to\infty$ and any $\eta_n\downarrow0$, with
$\epsilon_n=d(\eta_n)$,
\begin{equation}
 \E[\gen(W,Y^n)]
 =(1+o(1))\sqrt{\frac{\epsilon_n}{\pi n}}.            \label{eq:rate-tight}
\end{equation}
The right-hand side of~\eqref{eq:main-gen} is
$\sqrt{\epsilon_n/(2n)}$ in this example. Hence its dependence on
$\epsilon_n$ and $n$ is unimprovable in general, and the achieved-to-upper-bound ratio tends to $\sqrt{2/\pi}$.
Moreover, if
\begin{equation}
 q_n:=\frac{\eta}{n}\E|S_n|,
\end{equation}
then $I(W;X_i)=I(W;Y_i)=d(q_n)$ for every $i$. The right-hand side of the individual-sample bound~\eqref{eq:individual} is therefore
$\sqrt{d(q_n)/2}=(1+o(1))q_n/2$, whereas~\eqref{eq:noisy-majority-gap} equals $q_n/2$. Thus the individual-sample upper bound is asymptotically exact.
\end{proposition}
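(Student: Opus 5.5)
The plan is to exploit the symmetry of the noisy-majority construction. Everything factors through the binary statistic $T_n$, and the individual-sample claims reduce to the same binary-channel calculation with a different bias.

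\textbf{Information identities.} Since $X=Y$, we have $I(W;X^n)=I(W;Y^n)$. The chain $Y^n\to T_n\to W$ is Markov and $T_n$ is a deterministic function of $Y^n$, so $I(W;Y^n)=I(W;T_n)$. Because $n$ is odd and the law of $Y^n$ is invariant under global sign flip, $T_n$ is uniform on $\{-1,+1\}$. A uniform input through a BSC with crossover $(1-\eta)/2$ gives
$$I(W;T_n)=\log 2-h\bigl(\tfrac{1-\eta}{2}\bigr)=d(\eta),$$
with $h$ the binary entropy in nats. Expanding this expression yields \eqref{eq:d-eta}.

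For the individual-sample part, compute $P(W=Y_i)$. By exchangeability, $P(T_n=Y_i)=\tfrac12+\tfrac{1}{2n}\E|S_n|$, since $\E[T_nY_i]=\tfrac1n\E[T_nS_n]=\tfrac1n\E|S_n|$. Then $\E[WY_i]=\eta\,\E[T_nY_i]=q_n$. The pair $(W,Y_i)$ is symmetric and has uniform marginals, so it is a uniform-input BSC with correlation $q_n$, and therefore $I(W;Y_i)=d(q_n)$.

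\textbf{Generalization gap.} Since $\ell(w,y)=(1-wy)/2$ and $Y$ is uniform, $L_P(w)=1/2$ for every $w$. It follows that
$$\E[\gen]=\tfrac{1}{2n}\E[WS_n].$$
Conditioning on $Y^n$ gives $\E[W\mid Y^n]=\eta T_n$, so $\E[WS_n]=\eta\,\E|S_n|$, which is \eqref{eq:noisy-majority-gap}. Equivalently, $\E[\gen]=q_n/2$.

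\textbf{Asymptotics.} By the CLT together with uniform integrability (the second moments are bounded), $\E|S_n|/\sqrt n\to\sqrt{2/\pi}$. A Taylor expansion gives $d(\eta)=\eta^2/2+O(\eta^4)$ as $\eta\to0$, so $\eta_n=(1+o(1))\sqrt{2\epsilon_n}$. Substituting,
$$\E[\gen]=\frac{\eta_n}{2n}\E|S_n|=(1+o(1))\frac{\sqrt{2\epsilon_n}}{2}\sqrt{\frac{2}{\pi n}}=(1+o(1))\sqrt{\frac{\epsilon_n}{\pi n}}.$$

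For the upper bound, $H(Y\mid X)=0$ and $\sigma^2=1/4$. The minimum in \eqref{eq:main-gen} equals $\epsilon_n/n$ as soon as $\epsilon_n/n\le\log2$, which always holds because $\epsilon_n\le\log 2$. Hence the right-hand side of \eqref{eq:main-gen} is $\sqrt{\epsilon_n/(2n)}$, and the ratio of achieved gap to bound is $\sqrt{2/\pi}(1+o(1))$.

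For the individual-sample bound, $q_n\le\eta_n\to0$, so $\sqrt{d(q_n)/2}=(1+o(1))\sqrt{q_n^2/4}=(1+o(1))q_n/2$. I am assuming here that \eqref{eq:individual} has the form $\frac1n\sum_i\sqrt{2\sigma^2 I(W;Y_i)}$; since all $n$ terms are equal, it reduces to $\sqrt{d(q_n)/2}$.

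\textbf{Main obstacle.} The only delicate point is the exact individual-sample identity: one must confirm that $(W,Y_i)$ is exactly a uniform-input BSC, with no residual asymmetry. This follows from global sign-flip symmetry. The remaining ingredient, $\E|S_n|\sim\sqrt{2n/\pi}$, can be cited as the standard random-walk asymptotic or obtained directly from Stirling's formula for the explicit binomial expression of $\E|S_n|$.
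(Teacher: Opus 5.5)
Your proposal is correct and follows essentially the same route as the paper: the Markov-chain reduction $I(W;Y^n)=I(W;T_n)=d(\eta)$, rewriting the 0--1 loss as $(1-wy)/2$ to get $\E[\gen]=\E[WS_n]/(2n)=\eta\E|S_n|/(2n)$, exchangeability for $\E[WY_i]=q_n$, and the asymptotics $d(\eta)=\eta^2/2+O(\eta^4)$ and $\E|S_n|\sim\sqrt{2n/\pi}$. One small correction: \eqref{eq:individual} is the transferred form $\frac1n\sum_i\sqrt{2\sigma^2\min\{H(Y),H(Y\mid X)+\epsilon_i\}}$, not the raw individual-sample form you assumed, but with $\epsilon_i=d(q_n)$, $H(Y\mid X)=0$ and $d(q_n)\le\log 2$ it reduces to the same $\sqrt{d(q_n)/2}$, so your conclusion stands.
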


Theorem~\ref{thm:transfer-tight} is exact at the information level, whereas Proposition~\ref{prop:gen-tight} proves risk-level tightness up to a universal constant. Neither statement claims that~\eqref{eq:main-gen} is an equality for every source, learner, or loss. In particular, a positive residual term describes information permitted by selective privacy; a particular learner may retain less information and have a much smaller generalization gap. Full proofs appear in Appendix~K.

\section{Compression and Coordinatewise Refinements}

A direct task-information budget complements privacy by controlling precisely the component left unrestricted by~\eqref{eq:intro-privacy}. Suppose
\begin{equation}
 I(W;Y^n)\leq\rho_n.                                  \label{eq:task-rate}
\end{equation}
The stronger full-sample condition $I(W;X^n,Y^n)\leq\rho_n$ implies~\eqref{eq:task-rate} by data processing.

Intersecting the privacy route with the task-compression route gives the next result. The minimum makes clear that either route may dominate.

\begin{theorem}                                                       \label{thm:compression}
Under selective privacy and~\eqref{eq:task-rate},
\begin{equation}
 \left|\E[\gen]\right|
 \leq\sqrt{\frac{2\sigma^2}{n}
 \min\{\rho_n,nH(Y),nH(Y\mid X)+\epsilon_n\}}.       \label{eq:compression}
\end{equation}
\end{theorem}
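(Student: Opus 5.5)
The plan is to bound the single quantity $I(W;Y^n)$ by each of three ceilings, take their minimum, and then feed that minimum into Lemma~\ref{lem:mi-gen}. No new information-theoretic inequality should be needed, since each ceiling is already available from an earlier result or from an assumption of the theorem.

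First, I will collect the three upper bounds on $I(W;Y^n)$.
\begin{itemize}[label={},leftmargin=*,nosep]
\item The task-compression hypothesis~\eqref{eq:task-rate} gives $I(W;Y^n)\leq\rho_n$ directly.
\item Selective privacy, $I(W;X^n)\leq\epsilon_n$, lets me apply the block form of Lemma~\ref{lem:transfer} with $U=W$, exactly as in~\eqref{eq:main-info} of Theorem~\ref{thm:main}. This yields $I(W;Y^n)\leq H(Y^n\mid X^n)+\epsilon_n$.
\item The same lemma (or simply $I(W;Y^n)\leq H(Y^n)$) gives the entropy cap.
\end{itemize}
Memorylessness of the i.i.d. pairs then reduces these to $nH(Y\mid X)+\epsilon_n$ and $nH(Y)$. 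Since all three bounds hold simultaneously, $I(W;Y^n)$ is at most their minimum.

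Second, Lemma~\ref{lem:mi-gen} applies. Its only requirements are that the task sample $Y^n$ is i.i.d. and that the loss satisfies the stated sub-Gaussian condition. Both hold here, and the joint law of $(W,Y^n)$ is induced by the learner channel on $(X^n,Y^n)$. The right-hand side of~\eqref{eq:mi-gen} is nondecreasing in $I(W;Y^n)$ because the square root is monotone. Substituting the minimum therefore gives~\eqref{eq:compression}. I would add a remark that if the full-sample condition $I(W;X^n,Y^n)\leq\rho_n$ is assumed instead, data processing recovers~\eqref{eq:task-rate} first.

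I expect no real obstacle. The only point to check is that Lemma~\ref{lem:mi-gen} needs nothing beyond the marginal law of $Y^n$ being i.i.d.; that is, the presence of $X^n$ in the learner's input does not break the decoupling argument. This holds because the change-of-measure step~\eqref{eq:dv-step} uses only $P_{WY^n}$ and $P_WP_{Y^n}$.
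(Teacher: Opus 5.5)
Your proposal is correct and matches the paper's proof: it intersects the cap $\min\{nH(Y),nH(Y\mid X)+\epsilon_n\}$ from Theorem~\ref{thm:main} with the hypothesis $I(W;Y^n)\leq\rho_n$ and substitutes the minimum into Lemma~\ref{lem:mi-gen}. The paper makes the same data-processing remark for a full-sample budget $I(W;X^n,Y^n)\leq\rho_n$, and your check that the decoupling step uses only $P_{WY^n}$ and $P_WP_{Y^n}$ is also correct.
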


Thus $\rho_n=o(n)$ removes the residual floor for every $P_{XY}$. A logarithmic task budget gives the familiar $O(\!\sqrt{\log n/n}\!)$ certificate. Privacy and compression therefore regularize different modes of memorization: the former limits sensitive dependence, while the latter limits residual task dependence.

Block information can be pessimistic when only a few records strongly influence the learner. The individual-sample information inequality~\cite{bu2020} states
\begin{equation}
 |\E[\gen]|
 \leq\frac1n\sum_{i=1}^n\sqrt{2\sigma^2 I(W;Y_i)}.    \label{eq:ismi}
\end{equation}
Transferring a separate privacy budget for each coordinate gives a data-dependent refinement of Theorem~\ref{thm:main}.

\begin{theorem}                                                       \label{thm:individual}
If $I(W;X_i)\leq\epsilon_i$ for every $i$, then
\begin{equation}
 |\E[\gen]|
 \leq\frac1n\sum_{i=1}^n
 \sqrt{2\sigma^2\min\{H(Y),H(Y\mid X)+\epsilon_i\}}. \label{eq:individual}
\end{equation}
\end{theorem}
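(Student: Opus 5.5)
The plan is to combine the individual-sample inequality~\eqref{eq:ismi} with the single-letter transfer of Lemma~\ref{lem:transfer}, applied separately to each coordinate. Inequality~\eqref{eq:ismi} reduces the problem to bounding each term $I(W;Y_i)$. It therefore suffices to show
\begin{equation*}
 I(W;Y_i)\leq\min\{H(Y),H(Y\mid X)+\epsilon_i\}
\end{equation*}
for every $i$. The bound~\eqref{eq:individual} then follows by monotonicity of the square root and summation.

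For a fixed index $i$, I would apply Lemma~\ref{lem:transfer} in its single-letter form with the triple $(U,X,Y)=(W,X_i,Y_i)$ and budget $\epsilon=\epsilon_i$. The lemma requires no Markov structure, so it is irrelevant that $W$ depends on the whole sample $(X^n,Y^n)$ rather than on $(X_i,Y_i)$ alone. The only inputs are the joint law of $(W,X_i,Y_i)$ and the hypothesis $I(W;X_i)\leq\epsilon_i$. Because the pairs are i.i.d., $(X_i,Y_i)\sim P_{XY}$. Hence $H(Y_i)=H(Y)$ and $H(Y_i\mid X_i)=H(Y\mid X)$, and the lemma gives $I(W;Y_i)\leq\hres_{\epsilon_i}(P_{XY})$, which is exactly the required per-coordinate bound. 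Finiteness of $H(Y)$ holds because the alphabets are finite.

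The remaining point to check is that~\eqref{eq:ismi} applies under the paper's assumptions. It needs $\ell(w,Y_i)-L_P(w)$ to be $\sigma^2$-sub-Gaussian under $P_Y$ for each fixed $w$, which is the standing assumption. Each summand then follows from a change of measure against $P_WP_{Y_i}$, parallel to~\eqref{eq:dv-step} with $n=1$ per coordinate. I expect no real obstacle here: the step needing the most care is confirming that the single-letter transfer holds despite $W$ depending on all the other records, and the chain-rule proof of Lemma~\ref{lem:transfer} already covers that.

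As a sanity remark, concavity of $\epsilon\mapsto\hres_\epsilon$ from Proposition~\ref{prop:h-properties}, together with concavity of the square root, shows that if $\sum_i\epsilon_i\leq\epsilon_n$ then the right side of~\eqref{eq:individual} is at most $\sqrt{2\sigma^2\hres_{\epsilon_n/n}(P_{XY})}$ by Jensen's inequality. This makes~\eqref{eq:individual} consistent with the form of~\eqref{eq:main-gen}, although the theorem itself does not need this.
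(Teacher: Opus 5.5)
Your proposal is correct and follows the paper's proof essentially verbatim: the paper also applies Lemma~\ref{lem:transfer} to the triple $(W,X_i,Y_i)$ to obtain $I(W;Y_i)\leq\hres_{\epsilon_i}(P_{XY})$ and inserts these bounds into the individual-sample inequality~\eqref{eq:ismi}. Your closing Jensen remark likewise matches the paper's block-reduction argument in Appendix~F.
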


\emph{Proof:}
Lemma~\ref{lem:transfer} applied to $(W,X_i,Y_i)$ bounds each $I(W;Y_i)$. Insert those bounds into~\eqref{eq:ismi}. \hfill$\square$

The block privacy constraint itself implies an averaged coordinatewise certificate. Independence of $X_1,\ldots,X_n$ gives
\begin{equation}
 \sum_{i=1}^n I(W;X_i)\leq I(W;X^n)\leq\epsilon_n.    \label{eq:block-individual}
\end{equation}
Since $u\mapsto\sqrt{\hres_u(P_{XY})}$ is concave and nondecreasing, Jensen's inequality reduces~\eqref{eq:individual} to~\eqref{eq:main-gen}. More importantly, nonuniform certified values $\epsilon_i$ can give a strictly smaller sum than replacing them all by their average.

\section{A Posterior-Variation Guarantee}

Mutual-information privacy is an average leakage constraint. Applications may instead certify that every released output changes the posterior of an individual private coordinate by only a small amount. Assume
\begin{equation}
 \|P_{X_i\mid W=w}-P_X\|_1\leq\delta_i
 \quad\text{for every $w$ in the support of $W$},      \label{eq:tv-condition}
\end{equation}
and let $p_{\min}:=\min_{x:P_X(x)>0}P_X(x)$. The finite alphabet and the restriction to the support of $P_X$ ensure that $p_{\min}>0$.

The posterior condition implies a coordinatewise mutual-information budget. Indeed, for any distributions $P$ and $Q$ on the same finite support,
\begin{align}
 D(P\|Q)&\leq \chi^2(P\|Q)
 =\sum_x\frac{(P(x)-Q(x))^2}{Q(x)}\nonumber\\
 &\leq\frac{\|P-Q\|_1^2}{\min_xQ(x)}.                \label{eq:kl-tv}
\end{align}
Applying~\eqref{eq:kl-tv} conditionally on $W=w$ and averaging over $W$ gives $I(W;X_i)\leq\delta_i^2/p_{\min}$. Combining this conversion with Theorem~\ref{thm:individual} yields a directly checkable posterior-variation result.

\begin{corollary}                                                     \label{cor:tv}
Under~\eqref{eq:tv-condition},
\begin{equation}
 |\E[\gen]|
 \leq\frac1n\sum_{i=1}^n\sqrt{2\sigma^2
 \min\left\{H(Y),H(Y\mid X)+\frac{\delta_i^2}{p_{\min}}\right\}}. \label{eq:tv-bound}
\end{equation}
\end{corollary}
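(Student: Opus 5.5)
The plan is to derive Corollary~\ref{cor:tv} from Theorem~\ref{thm:individual}. The whole task is to check that condition~\eqref{eq:tv-condition} gives the coordinatewise budget $I(W;X_i)\leq\epsilon_i:=\delta_i^2/p_{\min}$ for every $i$.

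\textbf{Step 1: a single output.} Fix $i$ and an output $w$ in the support of $W$. Let $\mathcal X_+:=\{x:P_X(x)>0\}$. Because the sample is i.i.d., $X_i\sim P_X$. Hence $P_{X_i\mid W=w}$ is absolutely continuous with respect to $P_X$ for $P_W$-almost every $w$, so it is supported in $\mathcal X_+$. (When $W$ has finite support this holds for every $w$.)

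On $\mathcal X_+$ we apply~\eqref{eq:kl-tv} with $P=P_{X_i\mid W=w}$ and $Q=P_X$. The divergence is bounded by the $\chi^2$-divergence, via $\log t\leq t-1$ and Jensen's inequality. Since $\min_{x\in\mathcal X_+}Q(x)=p_{\min}$, this gives
\[
D(P_{X_i\mid W=w}\|P_X)\leq \frac{\|P_{X_i\mid W=w}-P_X\|_1^2}{p_{\min}}\leq\frac{\delta_i^2}{p_{\min}}.
\]
The middle inequality in~\eqref{eq:kl-tv} uses $\sum_x a_x^2/Q(x)\leq(\sum_x|a_x|)^2/\min_x Q(x)$, since $\sum_x a_x^2\leq(\sum_x|a_x|)^2$.

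\textbf{Step 2: average over $W$.} Use the identity $I(W;X_i)=\E_W\bigl[D(P_{X_i\mid W}\|P_{X_i})\bigr]$ with $P_{X_i}=P_X$. Averaging the pointwise bound from Step~1 gives $I(W;X_i)\leq\delta_i^2/p_{\min}$.

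\textbf{Step 3: conclude.} Invoke Theorem~\ref{thm:individual} with $\epsilon_i=\delta_i^2/p_{\min}$. Each $\min\{H(Y),H(Y\mid X)+\epsilon_i\}$ in~\eqref{eq:individual} becomes exactly the summand of~\eqref{eq:tv-bound}. No monotonicity issue arises, since we substitute the certified budgets themselves.

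\textbf{Main obstacle.} The only delicate point is measure-theoretic. If $\mathcal W$ is not finite, the phrase ``every $w$ in the support'' must be read as $P_W$-almost every $w$. Regular conditional distributions exist here because $X$ is finite, so the averaging in Step~2 remains valid, and the absolute-continuity claim used to restrict to $\mathcal X_+$ also only needs to hold almost surely.
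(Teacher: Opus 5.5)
Your proposal is correct and follows the paper's proof essentially step for step. You apply the pointwise chain $D\le\chi^2\le\|P-Q\|_1^2/p_{\min}$ to $P_{X_i\mid W=w}$ against $P_X$, average over $W$ to get $I(W;X_i)\le\delta_i^2/p_{\min}$, and invoke Theorem~\ref{thm:individual} with $\epsilon_i=\delta_i^2/p_{\min}$; your remarks on absolute continuity and the almost-everywhere reading of the support condition are sound details that the paper leaves implicit.
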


This statement is intentionally posterior based. It should not be read as differential privacy: it depends on $P_X$, controls marginal records rather than neighboring datasets, and contains the smallest positive prior mass. Its value is that a uniform posterior certificate cannot hide a rare high-leakage output inside an average.

\section{Rates and Active Constraints}

The normalized budgets determine which information route is active. The next proposition records the limiting behavior of the displayed certificates and prevents a common misreading of a residual floor as a universal lower bound on risk.

\begin{proposition}                                                   \label{prop:rates}
Assume $H(Y)>0$ and let $a_n=\epsilon_n/n$. If $a_n\to a$, the right-hand side of~\eqref{eq:main-gen} converges to
\begin{equation}
 \sqrt{2\sigma^2\min\{H(Y),H(Y\mid X)+a\}}.          \label{eq:limit}
\end{equation}
The privacy-only certificate vanishes precisely when $H(Y\mid X)=0$ and $a_n\to0$. Under the task budget~\eqref{eq:task-rate}, the refined certificate also vanishes whenever $\rho_n/n\to0$, regardless of $H(Y\mid X)$.
\end{proposition}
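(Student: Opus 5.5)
The three claims follow from continuity of the scalar map $\epsilon\mapsto\hres_\epsilon(P_{XY})$ in Proposition~\ref{prop:h-properties} together with monotonicity of the square root. No new information-theoretic argument is needed.

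\emph{Limit.} The right-hand side of~\eqref{eq:main-gen} equals $g(a_n)$ with $g(u):=\sqrt{2\sigma^2\hres_u(P_{XY})}$. Proposition~\ref{prop:h-properties} makes $u\mapsto\hres_u$ one-Lipschitz, hence continuous. The square root is continuous on $[0,\infty)$, and $\hres_u\geq H(Y\mid X)\geq0$. So $g$ is continuous, and $a_n\to a$ gives $g(a_n)\to g(a)$, which is~\eqref{eq:limit}. This covers finite $a$. If the authors also intend $a=\infty$, I would note that the minimum saturates at $H(Y)$ once $a_n\geq I(X;Y)$, so the limit is $\sqrt{2\sigma^2H(Y)}$, consistent with~\eqref{eq:limit} read with $a=\infty$.

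\emph{Characterization of vanishing.} I read ``precisely'' as an if-and-only-if over sequences, with $H(Y)>0$ fixed.
\begin{itemize}[leftmargin=*,nosep]
\item \emph{Sufficiency.} If $H(Y\mid X)=0$ and $a_n\to0$, the bound is at most $\sqrt{2\sigma^2 a_n}\to0$.
\item \emph{Necessity.} Use $g(a_n)^2/(2\sigma^2)=\min\{H(Y),H(Y\mid X)+a_n\}$. If this tends to $0$, then because $H(Y)>0$ the minimum is eventually attained by the second term, and that term must tend to $0$. Both summands are nonnegative, so $H(Y\mid X)=0$ and $a_n\to0$.
\end{itemize}
This is the one step that uses the hypothesis $H(Y)>0$: without it the certificate is identically zero and the ``precisely'' fails. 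I would also assume $\sigma^2>0$, since otherwise the bound is trivially zero. I would flag both assumptions explicitly.

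\emph{Task budget.} Under~\eqref{eq:task-rate}, Theorem~\ref{thm:compression} bounds the gap by
\[
\sqrt{\tfrac{2\sigma^2}{n}\min\{\rho_n,\dots\}}\;\leq\;\sqrt{2\sigma^2\rho_n/n}.
\]
This tends to $0$ whenever $\rho_n/n\to0$, and the residual term $H(Y\mid X)$ plays no role.

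The only delicate point is the logical reading of ``precisely'' in the necessity direction, and the plan above handles it directly. Everything else is routine continuity.
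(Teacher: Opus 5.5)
Your proposal is correct and follows essentially the same route as the paper's proof. You use continuity of the minimum and the square root for the limit, the hypothesis $H(Y)>0$ to force the residual branch to be active in the necessity direction, and the nonnegative branch $\rho_n/n$ for the task-budget claim; your remarks on $a=\infty$ and on needing $\sigma^2>0$ for the ``precisely'' claim are valid refinements that the paper leaves implicit.
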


The proof follows by continuity of the square root and minimum. The proposition classifies the certificate, not the realized gap. A learner that is independent of the task sample has zero expected gap even in a regime where~\eqref{eq:limit} is positive.

For quick comparison, Table~\ref{tab:rates} lists several common scalings. The deterministic-task rows assume $H(Y\mid X)=0$; the adaptive row additionally assumes a bounded per-update private-information cost.

\begin{table}[t]
\centering
\caption{Rates certified by the available information route.}
\label{tab:rates}
\begin{tabular}{@{}lll@{}}
\toprule
Condition & Budget scaling & Certified order \\
\midrule
Deterministic task & $\epsilon_n=O(1)$ & $O(n^{-1/2})$ \\
Deterministic task & $\epsilon_n=O(\log n)$ & $O(\sqrt{\log n/n})$ \\
Positive residual & $\epsilon_n=o(n)$ & residual floor \\
Task compression & $\rho_n=o(n)$ & $o(1)$ \\
$K$ fixed updates & $E_{n,K}=O(K)$ & $O(\sqrt{K/n})$ \\
\bottomrule
\end{tabular}
\end{table}

\section{Leakage Required for a Target Gap}

The preceding results map a certified privacy budget to a generalization tolerance. For design, the inverse question is often more useful: given a target $\gamma$, how much selective leakage can the certificate permit? Define the information tolerance
\begin{equation}
 \tau_\gamma:=\frac{\gamma^2}{2\sigma^2}.             \label{eq:tau-gamma}
\end{equation}
If $\tau_\gamma\geq H(Y)$, the entropy cap already certifies the target. The nontrivial regime is therefore $\tau_\gamma<H(Y)$, where the privacy and task-compression branches must carry the guarantee. Monotonicity of the square root makes their inverse thresholds exact for the displayed certificate.

\begin{corollary}                                                     \label{cor:inverse-design}
Fix $\gamma>0$ with $\tau_\gamma<H(Y)$. The privacy route in Theorem~\ref{thm:main} certifies
$|\E[\gen]|\leq\gamma$ if and only if
\begin{equation}
 H(Y\mid X)\leq\tau_\gamma,\qquad
 \epsilon_n\leq n\bigl(\tau_\gamma-H(Y\mid X)\bigr).             \label{eq:inverse-privacy}
\end{equation}
Under the direct task budget~\eqref{eq:task-rate}, the combined certificate meets the target if and only if either~\eqref{eq:inverse-privacy} holds or
\begin{equation}
 \rho_n\leq n\tau_\gamma.                              \label{eq:inverse-task}
\end{equation}
For a fixed leakage budget $\epsilon_n=\epsilon$ and $\tau_\gamma>H(Y\mid X)$, the privacy route therefore succeeds for every
\begin{equation}
 n\geq\left\lceil\frac{\epsilon}
 {\tau_\gamma-H(Y\mid X)}\right\rceil.                \label{eq:inverse-sample}
\end{equation}
No sample size can make that route certify the target when $\tau_\gamma\leq H(Y\mid X)$ and $\epsilon>0$.
\end{corollary}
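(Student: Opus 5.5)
The argument reduces to inverting monotone scalar inequalities. The key observation is that ``the privacy route certifies $|\E[\gen]|\leq\gamma$'' means the right-hand side of \eqref{eq:main-gen} is at most $\gamma$. Since $t\mapsto\sqrt{2\sigma^2 t}$ is strictly increasing on $[0,\infty)$, this holds if and only if
\begin{equation*}
 \min\{H(Y),H(Y\mid X)+\epsilon_n/n\}\leq\tau_\gamma .
\end{equation*}
Because $\tau_\gamma<H(Y)$ by assumption, the branch $H(Y)$ cannot satisfy this. A minimum of two terms lies below a threshold iff one of the terms does, so the condition is equivalent to $H(Y\mid X)+\epsilon_n/n\leq\tau_\gamma$. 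Since $\epsilon_n\geq0$, this is exactly the pair of conditions in \eqref{eq:inverse-privacy}. The first condition $H(Y\mid X)\leq\tau_\gamma$ is implied by the second, and I would keep it only to make feasibility explicit.

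For the combined certificate of Theorem~\ref{thm:compression} I would argue the same way. The bound is at most $\gamma$ iff
\begin{equation*}
 \min\{\rho_n/n,\;H(Y),\;H(Y\mid X)+\epsilon_n/n\}\leq\tau_\gamma .
\end{equation*}
Again $H(Y)$ is excluded by $\tau_\gamma<H(Y)$. The minimum of the remaining two terms is at most $\tau_\gamma$ iff one of them is, which gives ``\eqref{eq:inverse-privacy} or \eqref{eq:inverse-task}''. The only care needed is to read ``the certificate meets the target'' as a statement about the displayed expression, which the paper explicitly intends.

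For the sample-size statement, fix $\epsilon_n=\epsilon$ and $\tau_\gamma>H(Y\mid X)$. The second inequality in \eqref{eq:inverse-privacy} becomes $n\geq\epsilon/(\tau_\gamma-H(Y\mid X))$. Since $n$ is an integer, this is equivalent to $n$ being at least the ceiling in \eqref{eq:inverse-sample}, and by the first part every such $n$ succeeds.

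For the impossibility claim, suppose $\tau_\gamma\leq H(Y\mid X)$ and $\epsilon>0$. Then
\begin{equation*}
 H(Y\mid X)+\epsilon/n>H(Y\mid X)\geq\tau_\gamma
\end{equation*}
for every finite $n$, so \eqref{eq:inverse-privacy} fails. In the boundary case $\tau_\gamma=H(Y\mid X)$ the inequality is strict precisely because $\epsilon>0$, which is why that hypothesis is needed.

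None of these steps is genuinely hard. The main point requiring attention is the exact ``if and only if'' bookkeeping around the excluded entropy branch, together with the strictness at the boundary $\tau_\gamma=H(Y\mid X)$.
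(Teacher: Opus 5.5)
Your proposal is correct and follows essentially the same route as the paper's proof in Appendix~P. Strict monotonicity of the square root reduces certification to $\min\{\cdot\}\leq\tau_\gamma$, the entropy branch is excluded by $\tau_\gamma<H(Y)$, and the ``a minimum is below a threshold iff some term is'' step gives both equivalences; integrality of $n$ and the strict inequality $H(Y\mid X)+\epsilon/n>\tau_\gamma$ then give the sample-size threshold and the impossibility claim, the latter handling $\tau_\gamma<H(Y\mid X)$ and $\tau_\gamma=H(Y\mid X)$ in one step where the paper splits them.
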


The residual thus has a concrete design meaning. It consumes $H(Y\mid X)$ nats of the per-sample information tolerance before any algorithm-dependent leakage can be admitted. In the deterministic case,~\eqref{eq:inverse-sample} reduces to $n\geq\lceil2\sigma^2\epsilon/\gamma^2\rceil$; with residual task randomness, task compression can meet tolerances that privacy alone cannot certify.

\section{A Fixed-Horizon Adaptive Application}

Repeated updates can reveal more than the final update viewed in isolation. We therefore account for the complete visible history. At round $k$, let $V_k$ be generated from $(X^n,Y^n)$ and the preceding transcript $\Pi_{k-1}=(V_1,\ldots,V_{k-1})$. Fix the horizon $K$, and assume deterministic conditional budgets
\begin{align}
 I(V_k;X^n\mid\Pi_{k-1})&\leq e_{n,k},               \label{eq:ledger-x}\\
 I(V_k;Y^n\mid\Pi_{k-1})&\leq r_{n,k}.               \label{eq:ledger-y}
\end{align}
Let $W_K$ be a randomized post-processing of the full transcript using no additional data access. This formulation charges every released model, diagnostic, and data-dependent control signal that is visible to later rounds.

The chain rule converts the roundwise constraints into one block constraint. Applying the one-shot results to the transcript yields the following limited adaptive consequence.

\begin{corollary}                                                     \label{cor:adaptive}
Write $E_{n,K}=\sum_{k=1}^K e_{n,k}$ and $R_{n,K}=\sum_{k=1}^K r_{n,k}$. Under~\eqref{eq:ledger-x},
\begin{align}
 G_{n,K}&:=\min\left\{H(Y),H(Y\mid X)+\frac{E_{n,K}}n\right\},\nonumber\\
 |\E[\gen(W_K,Y^n)]|
 &\leq\sqrt{2\sigma^2G_{n,K}}.                       \label{eq:adaptive-one}
\end{align}
If~\eqref{eq:ledger-y} also holds, $R_{n,K}/n$ is an additional branch in the minimum.
\end{corollary}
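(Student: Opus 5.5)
The plan is to reduce the adaptive setting to the one-shot setting of Theorem~\ref{thm:main}, applied to the whole transcript $\Pi_K=(V_1,\ldots,V_K)$ in place of $W$, and then to pass to $W_K$ by post-processing. There are three steps: a chain-rule ledger, the transfer Lemma~\ref{lem:transfer}, and Lemma~\ref{lem:mi-gen}.

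\textbf{Step 1 (ledger).} By the chain rule for mutual information,
\begin{equation*}
 I(\Pi_K;X^n)=\sum_{k=1}^K I(V_k;X^n\mid\Pi_{k-1})\leq E_{n,K},
\end{equation*}
using~\eqref{eq:ledger-x} term by term. The budgets are deterministic constants, so no averaging over random budgets is needed. In the same way, \eqref{eq:ledger-y} gives $I(\Pi_K;Y^n)\leq R_{n,K}$. This step needs no Markov structure among the rounds, because the chain rule is an identity for any joint law of $(\Pi_K,X^n,Y^n)$.

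\textbf{Step 2 (post-processing).} $W_K$ is produced from $\Pi_K$ with randomness independent of $(X^n,Y^n)$ given $\Pi_K$. Hence $(X^n,Y^n)\to\Pi_K\to W_K$ is a Markov chain, and data processing gives
\begin{equation*}
 I(W_K;X^n)\leq E_{n,K},\qquad I(W_K;Y^n)\leq R_{n,K}.
\end{equation*}
This is where the modeling assumption ``no additional data access'' is used, and I expect it to be the only delicate point. The paper's remark after Lemma~\ref{lem:transfer} already records the required stability under post-processing, so I would cite it.

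\textbf{Step 3 (transfer and generalization).} $W_K$ is now $(X^n,E_{n,K})$-selectively private. The block form of Lemma~\ref{lem:transfer}, with $H(Y^n\mid X^n)=nH(Y\mid X)$ and $H(Y^n)=nH(Y)$ by the i.i.d. law, gives
\begin{equation*}
 I(W_K;Y^n)\leq nG_{n,K}.
\end{equation*}
Equivalently, Theorem~\ref{thm:main} applies with $\epsilon_n=E_{n,K}$. Lemma~\ref{lem:mi-gen} then yields~\eqref{eq:adaptive-one}. This step is valid because the loss is evaluated at $W_K$ on the i.i.d. task sample $Y^n$, which is exactly the setting of that lemma.

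\textbf{Optional branch.} If~\eqref{eq:ledger-y} also holds, I would intersect with $I(W_K;Y^n)\leq R_{n,K}$ from Step 2, exactly as in Theorem~\ref{thm:compression} with $\rho_n=R_{n,K}$. This adds $R_{n,K}/n$ as a branch of the minimum inside the square root.
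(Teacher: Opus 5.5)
Your proposal is correct and follows the paper's proof essentially step for step. You use the chain-rule ledger to get $I(\Pi_K;X^n)\leq E_{n,K}$ and $I(\Pi_K;Y^n)\leq R_{n,K}$, pass to $W_K$ by data processing (no further data access), and then apply Theorems~\ref{thm:main} and~\ref{thm:compression} with the cumulative budgets.
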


Indeed, the chain rule gives $I(\Pi_K;X^n)\leq E_{n,K}$ and $I(\Pi_K;Y^n)\leq R_{n,K}$; data processing transfers both inequalities to $W_K$. For $Y=f(X)$ and constant per-update leakage, the privacy route vanishes when $K=o(n)$. A summable schedule can keep $E_{n,K}$ bounded for an arbitrarily long predetermined horizon. The result does not cover data-dependent checkpoint selection, stopping, or changing source laws; those require a broader transcript theorem.

\section{Composition Diagnostics and Budget Schedules}

Conditional accounting in~\eqref{eq:ledger-x} is essential. Small unconditional leakage for every separate release does not imply small leakage for their joint transcript. To see this, let a protected bit $B$ and a random key $R$ be independent and uniform. Release $V_1=R$ and $V_2=B\oplus R$. Each marginal release is independent of $B$, so
\begin{equation}
 I(V_1;B)=I(V_2;B)=0,                                \label{eq:synergy-marginal}
\end{equation}
but the pair reconstructs the bit and satisfies
\begin{equation}
 I(V_1,V_2;B)=H(B),\qquad I(V_2;B\mid V_1)=H(B).      \label{eq:synergy-joint}
\end{equation}
The conditional ledger detects exactly the information that becomes visible after earlier releases are known. This simple synergy example rules out summing marginal per-round guarantees.

The ledger also makes the effect of update schedules explicit. Suppose $Y=f(X)$ and the task ledger is unavailable. The privacy route in~\eqref{eq:adaptive-one} is
\begin{equation}
 |\E[\gen(W_K,Y^n)]|\leq\sqrt{2\sigma^2E_{n,K}/n}.   \label{eq:adaptive-deterministic}
\end{equation}
Several common schedules lead to qualitatively different update capacities:
\begin{align}
 e_{n,k}=e_0&\quad\Longrightarrow\quad E_{n,K}=e_0K,                 \label{eq:constant-schedule}\\
 e_{n,k}=e_0\beta^{k-1}&\quad\Longrightarrow\quad
 E_{n,K}\leq\frac{e_0}{1-\beta},\quad 0<\beta<1,                    \label{eq:geometric-schedule}\\
 e_{n,k}=\frac{e_0}{k}&\quad\Longrightarrow\quad
 E_{n,K}\leq e_0(1+\log K).                                        \label{eq:harmonic-schedule}
\end{align}
Thus constant-cost updates require $K=o(n)$ for the displayed certificate to vanish. Geometrically decaying costs give an $O(n^{-1/2})$ certificate uniformly over every predetermined $K$. Harmonic costs permit $K$ to grow subexponentially in $n$ while maintaining $E_{n,K}=o(n)$. More generally, a schedule $e_{n,k}=e_0k^{-p}$ is summable for $p>1$ and has logarithmic growth at $p=1$.

These calculations concern certified conditional mutual informations, not nominal optimizer step sizes. A small learning rate need not imply a small information cost. Conversely, many internal optimization steps are free if they are not released and the final channel can be certified directly. The transcript model is appropriate when intermediate values influence later data-dependent choices or are exposed outside the training routine.

Task compression composes in the same manner. If a stronger conditional full-sample budget
\begin{equation}
 I(V_k;X^n,Y^n\mid\Pi_{k-1})\leq c_{n,k}              \label{eq:full-ledger}
\end{equation}
is available, conditional data processing implies both~\eqref{eq:ledger-x} and~\eqref{eq:ledger-y} with $e_{n,k}=r_{n,k}=c_{n,k}$. The terminal certificate then uses
\begin{align}
 &\min\left\{H(Y),H(Y\mid X)+\frac{\sum_k c_{n,k}}n,
                         \frac{\sum_k c_{n,k}}n\right\}\nonumber\\
 &\hspace{2em}=\min\left\{H(Y),\frac{\sum_k c_{n,k}}n\right\}.       \label{eq:full-ledger-bound}
\end{align}
In that case the direct task route dominates the residual privacy route. Reporting both routes remains useful because selective privacy can be certified even when a full-sample information guarantee is unavailable.

An audit should therefore record the random object released at each round, condition its cost on the complete preceding history, and separate private-coordinate leakage from task information. Only after summing the corresponding ledgers should the residual transfer be applied. This order avoids treating selective privacy as if it were full-sample stability and makes the scaling with the number of updates transparent.

\section{Loss-Aware Task Reduction}

The task coordinate should be chosen at the granularity actually used by the loss. Suppose there is a finite-valued statistic $R=\phi(Y)$ and a loss $\widetilde\ell$ such that
\begin{equation}
 \ell(w,Y)=\widetilde\ell(w,R).                       \label{eq:task-reduction}
\end{equation}
The learner may still use the full pair $(X^n,Y^n)$; no Markov restriction is imposed on $W$. Nevertheless, empirical and population risk depend only on $R^n$. Applying the transfer argument to the pair $(X,R)$ yields the sharper loss-aware certificate.

\begin{corollary}                                                     \label{cor:task-reduction}
If $\widetilde\ell(w,R)$ is $\sigma^2$-sub-Gaussian and
$I(W;X^n)\leq\epsilon_n$, then
\begin{align}
 I(W;R^n)&\leq\min\{nH(R),nH(R\mid X)+\epsilon_n\},    \label{eq:reduced-info}\\
 |\E[\gen(W,R^n)]|
 &\leq\sqrt{2\sigma^2\min\left\{H(R),H(R\mid X)+\frac{\epsilon_n}{n}\right\}}.
                                                               \label{eq:reduced-gen}
\end{align}
\end{corollary}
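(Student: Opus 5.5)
The plan is to repeat the proof of Theorem~\ref{thm:main} with the task coordinate $Y$ replaced by the loss-relevant statistic $R=\phi(Y)$. Because $\phi$ is deterministic, the pairs $(X_i,R_i)$, $i=1,\dots,n$, are i.i.d. with law $P_{XR}$, where $P_{XR}$ is the image of $P_{XY}$ under $(x,y)\mapsto(x,\phi(x,y))$. Since $R$ is finite-valued, $H(R)<\infty$. The learner $P_{W\mid X^n,Y^n}$ is arbitrary, so $(W,X^n,R^n)$ is simply a jointly distributed triple. Lemma~\ref{lem:transfer} needs no Markov condition, so it applies to this triple with $U=W$ and the blocks $(X^n,R^n,\epsilon_n)$. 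This gives
\[
I(W;R^n)\leq\min\{H(R^n),H(R^n\mid X^n)+\epsilon_n\}.
\]
Memorylessness of the $(X_i,R_i)$ then yields $H(R^n)=nH(R)$ and $H(R^n\mid X^n)=nH(R\mid X)$, which is~\eqref{eq:reduced-info}.

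For the generalization bound I would first check that the risk quantities depend on $Y$ only through $R$. By~\eqref{eq:task-reduction},
\[
L_P(w)=\E_{R\sim P_R}[\widetilde\ell(w,R)],\qquad L_n(w)=\frac1n\sum_{i=1}^n\widetilde\ell(w,R_i).
\]
Hence $\gen(W,Y^n)=\gen(W,R^n)$, where the right-hand side is computed for the reduced loss $\widetilde\ell$ on the i.i.d. task sample $R^n$. The hypothesis that $\widetilde\ell(w,R)$ is $\sigma^2$-sub-Gaussian, read as centered sub-Gaussianity under $P_R$ for each fixed $w$, is exactly the assumption of Lemma~\ref{lem:mi-gen} with $R$ in place of $Y$. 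That lemma only needs the joint law of $(W,R^n)$ with $R^n$ i.i.d., so it gives $|\E[\gen(W,R^n)]|\leq\sqrt{2\sigma^2 I(W;R^n)/n}$. Substituting~\eqref{eq:reduced-info} and dividing the minimum by $n$ gives~\eqref{eq:reduced-gen}.

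The main point to handle carefully is the change-of-measure step in Lemma~\ref{lem:mi-gen}. We must make sure the decoupled reference law $P_WP_{R^n}$ is the right one even though $W$ may depend on the part of $Y^n$ not captured by $R^n$, and on $X^n$. This is fine: the function $F(W,R^n)=L_n(W)-L_P(W)$ is measurable in $(W,R^n)$ alone. The Donsker--Varadhan step in~\eqref{eq:dv-step} therefore only uses $D(P_{WR^n}\|P_WP_{R^n})=I(W;R^n)$. The extra dependence of $W$ on $(X^n,Y^n)$ beyond $R^n$ enters only through the marginal $P_{WR^n}$ and never through the bound.

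Finally, I would remark that $H(R)\leq H(Y)$, and that $H(R\mid X)\leq H(Y\mid X)$ by data processing. So the reduced certificate is never worse than~\eqref{eq:main-gen}.
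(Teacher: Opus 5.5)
Your proposal is correct and matches the paper's proof, which simply reruns Theorem~\ref{thm:main} with $Y$ replaced by $R$: the block transfer lemma applied to $(W,X^n,R^n)$, memorylessness of the pairs $(X_i,R_i)$, and Lemma~\ref{lem:mi-gen} for the reduced loss. Your discussion of the change of measure only makes explicit what the paper leaves implicit, and your one slip (writing $\phi(x,y)$ where the paper has $R=\phi(Y)$) does not affect the corollary, though your closing remark $H(R)\leq H(Y)$ does need $R$ to be a function of $Y$ alone.
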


The proof is Theorem~\ref{thm:main} with $Y$ replaced by $R$. Because $R$ is a deterministic function of $Y$,
\begin{equation}
 H(R\mid X)\leq H(Y\mid X),\qquad H(R)\leq H(Y).       \label{eq:coarsening-entropies}
\end{equation}
Thus a loss-aware reduction never enlarges either entropy branch. The improvement can be strict when the raw label includes nuisance information that does not affect evaluation. For several losses, one may take $R$ to be the tuple of all label statistics they use and certify the joint task.

Conditional entropy may be bounded through prediction error when it is easier to certify a decoder than a full source law. Let $|\mathcal Y|=m\geq2$, let $g:\mathcal X\to\mathcal Y$, and define
\begin{equation}
 p_e:=\Pr\{g(X)\neq Y\}.                              \label{eq:prediction-error}
\end{equation}
Introduce the error indicator $E=\mathbf 1\{g(X)\neq Y\}$. Since $E$ is determined by $(X,Y)$,
\begin{align}
 H(Y\mid X)
 &\leq H(E,Y\mid X)\nonumber\\
 &=H(E\mid X)+H(Y\mid E,X)\nonumber\\
 &\leq \hb(p_e)+p_e\log(m-1).                         \label{eq:fano-upper}
\end{align}
The last line uses that $Y=g(X)$ when $E=0$ and has at most $m-1$ possibilities when $E=1$. Substitution gives a predictability-based form of the main result.

\begin{corollary}                                                     \label{cor:fano}
Under the assumptions of Theorem~\ref{thm:main}, every decoder $g$ with error $p_e$ gives
\begin{align}
 |\E[\gen]|
 &\leq\left[2\sigma^2\min\left\{H(Y),\right.\right.\nonumber\\[-1mm]
 &\hspace{12mm}\left.\left.
 \hb(p_e)+p_e\log(m-1)+\frac{\epsilon_n}{n}\right\}\right]^{1/2}.
 \label{eq:fano-gen}
\end{align}
\end{corollary}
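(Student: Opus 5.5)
The plan is to substitute the Fano-type estimate~\eqref{eq:fano-upper} into the conclusion of Theorem~\ref{thm:main} and then use monotonicity.

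First I would confirm that~\eqref{eq:fano-upper} is valid. It holds for an arbitrary deterministic decoder $g:\mathcal X\to\mathcal Y$ and needs only the finite label alphabet with $m=|\mathcal Y|\geq2$.

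\begin{itemize}[leftmargin=*,nosep]
\item The first step is $H(Y\mid X)\leq H(E,Y\mid X)$, since adding a variable cannot reduce conditional entropy.
\item The middle equality is the chain rule.
\item $H(E\mid X)\leq H(E)=\hb(p_e)$.
\item $H(Y\mid E,X)$ splits over the two values of $E$. On $\{E=0\}$ we have $Y=g(X)$, so the conditional entropy is $0$. On $\{E=1\}$, $Y$ ranges over $\mathcal Y\setminus\{g(X)\}$, so the conditional entropy is at most $\log(m-1)$. Weighting by $\Pr\{E=1\}=p_e$ gives at most $p_e\log(m-1)$.
\end{itemize}

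Next, Theorem~\ref{thm:main}, i.e.~\eqref{eq:main-gen}, gives
$|\E[\gen]|\leq\sqrt{2\sigma^2\min\{H(Y),H(Y\mid X)+\epsilon_n/n\}}$.
The map $t\mapsto\sqrt{2\sigma^2\min\{H(Y),t\}}$ is nondecreasing in $t$. So I may replace $t=H(Y\mid X)+\epsilon_n/n$ by the larger quantity $\hb(p_e)+p_e\log(m-1)+\epsilon_n/n$. This yields exactly~\eqref{eq:fano-gen}.

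The only point needing care is that the bound must hold for every decoder $g$, including poor ones. That is automatic, because~\eqref{eq:fano-upper} is an upper bound valid for any $g$. When the Fano term exceeds $H(Y)$, the entropy branch of the minimum takes over, so the bound stays correct.

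There is no real obstacle here. The one thing worth stating explicitly is that $E$ is a deterministic function of $(X,Y)$, which justifies $H(Y\mid X)=H(E,Y\mid X)$; only the inequality direction is needed.
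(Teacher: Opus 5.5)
Your proposal is correct and matches the paper's argument: derive the Fano-type estimate \eqref{eq:fano-upper} through the error indicator $E$ and the chain rule, then substitute it into \eqref{eq:main-gen}, using that $t\mapsto\sqrt{2\sigma^2\min\{H(Y),t\}}$ is nondecreasing. Your step-by-step verification of \eqref{eq:fano-upper} fills in exactly the details the paper states only briefly.
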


The decoder need not be used by the learning algorithm; it only certifies that the private coordinate predicts the task. At $p_e=0$, the deterministic-task rate is recovered. A fixed positive error produces a residual floor, consistent with Proposition~\ref{prop:obstruction}. The same reasoning applies after task reduction by predicting $R=\phi(Y)$ directly, often with a smaller alphabet and a lower error.

For example, a ten-class task with a certified error $p_e=0.02$ satisfies
$H(Y\mid X)\leq \hb(0.02)+0.02\log9\approx0.142$ nats. For $[0,1]$ loss and vanishing normalized leakage, the corresponding privacy-only certificate is at most $\sqrt{0.142/2}\approx0.266$. This numerical value remains an upper certificate, not a claim that the realized gap approaches the floor.

\section{Binary Illustration and Design Lessons}

Let $X\sim\operatorname{Bernoulli}(1/2)$ and
$Y=X\oplus N$, where $N\sim\operatorname{Bernoulli}(\theta)$ is independent and $0\leq\theta\leq1/2$. With natural logarithms,
\begin{equation}
 H(Y)=\log2,\qquad H(Y\mid X)=\hb(\theta),           \label{eq:binary-entropy}
\end{equation}
where $\hb(u)=-u\log u-(1-u)\log(1-u)$. For a $[0,1]$-valued loss, Theorem~\ref{thm:main} becomes
\begin{equation}
 |\E[\gen]|
 \leq\sqrt{\frac12\min\left\{\log2,\hb(\theta)+\frac{\epsilon_n}{n}\right\}}. \label{eq:binary-bound}
\end{equation}
At $\theta=0$, the task is determined by the private coordinate and selective privacy directly controls task memorization. For every fixed $\theta>0$, the privacy-only certificate approaches $\sqrt{\hb(\theta)/2}$ as $\epsilon_n/n\to0$. Adding a task budget $\rho_n$ replaces the interior minimum by
$\min\{\log2,\hb(\theta)+\epsilon_n/n,\rho_n/n\}$ and restores a vanishing rate whenever $\rho_n=o(n)$.

The privacy-only limiting values are already substantial for modest task noise. In nats, $\hb(0.01)\approx0.056$, $\hb(0.05)\approx0.199$, and $\hb(0.10)\approx0.325$. For bounded loss, these correspond to limiting certificates approximately $0.167$, $0.315$, and $0.403$. The rapid increase reflects a structural change rather than a loose dependence on $n$: the noise sequence $N^n$ is a task component that selective privacy does not constrain.

The example also distinguishes a certificate from an algorithmic guarantee. A nonzero residual says that the privacy constraint permits information-rich outputs; a particular learner may still generalize because it retains much less than the permitted information. Practical use therefore calls for auditing both leakage about $X^n$ and direct dependence on $Y^n$, especially after repeated tuning.

Three conclusions follow. First, a selective constraint must be transferred to the coordinate on which loss is evaluated before invoking an information-theoretic generalization theorem. Second, the conditional entropy is structural, not a proof artifact: it measures an unconstrained task channel, and Theorem~\ref{thm:transfer-tight} shows that the resulting information certificate is exactly attainable for broad source families. Third, privacy, task compression, and coordinatewise accounting are complementary controls. This focused treatment intentionally leaves the unified privacy--fairness framework and more general adaptive theory to a longer journal development.

\section{Conclusion}

Selective mutual-information privacy can imply a useful generalization certificate even when the protected and task coordinates differ. The correct bridge is a residual-information inequality that combines normalized leakage with $H(Y\mid X)$ and respects the entropy cap. The transfer certificate is exactly attainable for deterministic and independent-residual source families, and an explicit learner shows that the deterministic-task generalization rate is tight up to a universal constant. Residual task randomness requires an additional information bottleneck. Coordinatewise budgets and fixed-horizon information ledgers extend the same principle without changing its interpretation.

%\clearpage
\section*{Appendix: Proofs and Analytical Details}

This appendix contains only proofs and derivations of results stated in the main text. The statements are not repeated.

\subsection*{A. Proof of Lemma~\ref{lem:mi-gen}}

Let $P=P_{WY^n}$ and $Q=P_WP_{Y^n}$. Define
\begin{equation}
 F(w,y^n):=L_P(w)-\frac1n\sum_{i=1}^n\ell(w,y_i).      \label{eq:app-F}
\end{equation}
For fixed $w$, put $\xi_w(Y)=L_P(w)-\ell(w,Y)$. Independence of
$Y_1,\ldots,Y_n$ and the sub-Gaussian assumption give
\begin{align}
 \log\E_{Y^n}e^{\lambda F(w,Y^n)}
 &=\sum_{i=1}^n\log\E_Y
   \exp\left\{\frac{\lambda}{n}\xi_w(Y)\right\}\nonumber\\
 &\leq \frac{\lambda^2\sigma^2}{2n}.                 \label{eq:app-mgf}
\end{align}
The same bound holds after averaging over $W\sim P_W$, so it holds under $Q$. The Donsker--Varadhan variational inequality implies, for every $\lambda>0$,
\begin{align}
 \lambda\E_P[F]
 &\leq D(P\|Q)+\log\E_Q e^{\lambda F}\nonumber\\
 &\leq I(W;Y^n)+\frac{\lambda^2\sigma^2}{2n}.        \label{eq:app-dv}
\end{align}
When $I(W;Y^n)>0$, minimize after division by $\lambda$ at
$\lambda=\sqrt{2nI(W;Y^n)/\sigma^2}$. This yields
\begin{equation}
 \E_P[F]\leq\sqrt{\frac{2\sigma^2}{n}I(W;Y^n)}.      \label{eq:app-one-side}
\end{equation}
Repeating the calculation with $-F$ controls the opposite sign. When the mutual information is zero, $P=Q$ on the relevant sigma field and the expected centered quantity is zero.

The same argument also proves the individual-sample inequality used in~\eqref{eq:ismi}. For each $i$, apply the one-sample change of measure to $(W,Y_i)$ and $L_P(W)-\ell(W,Y_i)$. Summation and the triangle inequality give
\begin{equation}
 |\E[\gen(W,Y^n)]|
 \leq\frac1n\sum_{i=1}^n\sqrt{2\sigma^2I(W;Y_i)}.     \label{eq:app-ismi}
\end{equation}

\subsection*{B. Proof of Proposition~\ref{prop:h-properties} and Lemma~\ref{lem:transfer}}

The certificate has the equivalent representation
\begin{equation}
 \hres_\epsilon(P_{XY})
 =H(Y\mid X)+\min\{\epsilon,I(X;Y)\}.                 \label{eq:app-h-piecewise}
\end{equation}
The second term is affine with slope one on $[0,I(X;Y)]$ and constant thereafter. It is therefore continuous, nondecreasing, concave, and one-Lipschitz. Substitution at zero gives $H(Y\mid X)$, while every $\epsilon\geq I(X;Y)$ gives $H(Y)$. This proves Proposition~\ref{prop:h-properties}.

For the transfer implication, expand the joint mutual information in two orders:
\begin{align}
 I(U;X,Y)&=I(U;X)+I(U;Y\mid X)\nonumber\\
         &=I(U;Y)+I(U;X\mid Y).                      \label{eq:app-chain-two}
\end{align}
Rearranging and applying the assumptions gives
\begin{align}
 I(U;Y)
 &=I(U;X)+I(U;Y\mid X)-I(U;X\mid Y)\nonumber\\
 &\leq\epsilon+H(Y\mid X).                           \label{eq:app-transfer-residual}
\end{align}
The universal inequality $I(U;Y)\leq H(Y)$ supplies the other branch. No conditional independence assumption has been used. Replacing $(X,Y)$ by arbitrary blocks proves the block statement.

Equality in the residual branch requires saturation of the leakage budget together with
$I(U;Y\mid X)=H(Y\mid X)$ and $I(U;X\mid Y)=0$. These conditions need not be simultaneously feasible. This explains why $\hres_\epsilon$ is used as an upper certificate rather than as an achievability claim.

If $U'$ is generated from $U$ and fresh randomness conditionally independent of $(X,Y)$ given $U$, the Markov chains $U'\!-!U\!-!X$ and $U'\!-!U\!-!Y$ give
$I(U';X)\leq I(U;X)$ and $I(U';Y)\leq I(U;Y)$. This proves the post-processing observation.

\subsection*{C. Proof of Theorem~\ref{thm:main}}

Apply the block transfer inequality to $(W,X^n,Y^n)$:
\begin{equation}
 I(W;Y^n)\leq\min\{H(Y^n),H(Y^n\mid X^n)+\epsilon_n\}. \label{eq:app-main-block}
\end{equation}
The product law gives $H(Y^n)=nH(Y)$ and
$H(Y^n\mid X^n)=nH(Y\mid X)$, proving~\eqref{eq:main-info}. Substitution in Lemma~\ref{lem:mi-gen} gives
\begin{align}
 |\E[\gen]|
 &\leq\sqrt{\frac{2\sigma^2}{n}
        \min\{nH(Y),nH(Y\mid X)+\epsilon_n\}}\nonumber\\
 &=\sqrt{2\sigma^2\min\left\{H(Y),H(Y\mid X)+\frac{\epsilon_n}{n}\right\}}, \label{eq:app-main-final}
\end{align}
which is~\eqref{eq:main-gen}. For a loss in $[a,b]$, Hoeffding's lemma gives $\sigma^2=(b-a)^2/4$.

\subsection*{D. Proof of Proposition~\ref{prop:obstruction}}

Because $N^n$ is independent of $X^n$,
\begin{equation}
 I(W;X^n)=I(N^n;X^n)=0.                              \label{eq:app-obstruction-private}
\end{equation}
Since $N^n$ is a deterministic coordinate of $Y^n=(X^n,N^n)$,
\begin{equation}
 I(W;Y^n)=I(N^n;X^n,N^n)=H(N^n)=nH(N).               \label{eq:app-obstruction-task}
\end{equation}
Independence also gives $H(Y\mid X)=H(X,N\mid X)=H(N)$. Thus the residual branch is attained by this release at zero leakage. The calculation concerns task information only; no loss has been selected, so it is not a lower bound on every learner's generalization error.

\subsection*{E. Proof of Theorem~\ref{thm:compression}}

Theorem~\ref{thm:main} supplies
\begin{equation}
 I(W;Y^n)\leq\min\{nH(Y),nH(Y\mid X)+\epsilon_n\}.    \label{eq:app-comp-privacy}
\end{equation}
The additional hypothesis independently supplies $I(W;Y^n)\leq\rho_n$. Intersect the caps and apply Lemma~\ref{lem:mi-gen} to obtain~\eqref{eq:compression}. If the available condition is
$I(W;X^n,Y^n)\leq\rho_n$, data processing under the projection $(X^n,Y^n)\mapsto Y^n$ gives the required task budget.

If $\rho_n=o(n)$, the normalized task branch tends to zero. If $\rho_n=O(\log n)$, the displayed certificate is $O(\sqrt{\log n/n})$ independently of $H(Y\mid X)$.

\subsection*{F. Proof of Theorem~\ref{thm:individual} and the Block Reduction}

For each coordinate, Lemma~\ref{lem:transfer} gives
\begin{align}
 I(W;Y_i)
 &\leq\min\{H(Y),H(Y\mid X)+I(W;X_i)\}\nonumber\\
 &\leq\hres_{\epsilon_i}(P_{XY}).                     \label{eq:app-coordinate-transfer}
\end{align}
Insert~\eqref{eq:app-coordinate-transfer} into~\eqref{eq:app-ismi} to prove~\eqref{eq:individual}.

For the homogeneous block reduction, independence of $X_1,\ldots,X_n$ yields
\begin{align}
 I(W;X^n)
 &=H(X^n)-H(X^n\mid W)\nonumber\\
 &=\sum_{i=1}^nH(X_i)-H(X^n\mid W)\nonumber\\
 &\geq\sum_{i=1}^n[H(X_i)-H(X_i\mid W)]
 =\sum_{i=1}^nI(W;X_i).                              \label{eq:app-block-sum}
\end{align}
The function $u\mapsto\hres_u(P_{XY})$ is nonnegative, concave, and nondecreasing. The square root is also concave and nondecreasing, so
$g(u)=\sqrt{\hres_u(P_{XY})}$ is concave and nondecreasing. Jensen's inequality and~\eqref{eq:app-block-sum} imply
\begin{align}
 \frac1n\sum_{i=1}^ng(I(W;X_i))
 &\leq g\left(\frac1n\sum_{i=1}^nI(W;X_i)\right)\nonumber\\
 &\leq g(\epsilon_n/n).                              \label{eq:app-jensen}
\end{align}
Multiplication by $\sqrt{2\sigma^2}$ reproduces~\eqref{eq:main-gen}.

\subsection*{G. Proof of Corollary~\ref{cor:tv}}

Let $P$ and $Q$ be distributions on the support of $Q$, with
$q_{\min}=\min_xQ(x)>0$. The inequality $\log u\leq u-1$ gives
\begin{align}
 D(P\|Q)
 &\leq\sum_x\frac{(P(x)-Q(x))^2}{Q(x)}\nonumber\\
 &\leq\frac{1}{q_{\min}}\sum_x|P(x)-Q(x)|^2
 \leq\frac{\|P-Q\|_1^2}{q_{\min}}.                 \label{eq:app-tv-kl}
\end{align}
Use $P=P_{X_i\mid W=w}$, $Q=P_X$, and $q_{\min}=p_{\min}$. Condition~\eqref{eq:tv-condition} gives
\begin{equation}
 D(P_{X_i\mid W=w}\|P_X)\leq\delta_i^2/p_{\min}     \label{eq:app-tv-pointwise}
\end{equation}
for every supported $w$. Averaging over $W$ yields
$I(W;X_i)\leq\delta_i^2/p_{\min}$. Theorem~\ref{thm:individual} with these coordinate budgets proves~\eqref{eq:tv-bound}.

\subsection*{H. Proof of Proposition~\ref{prop:rates}}

Set
\begin{equation}
 b_n:=\min\{H(Y),H(Y\mid X)+a_n\}.                   \label{eq:app-bn}
\end{equation}
The minimum of finitely many continuous functions is continuous. Thus $a_n\to a$ implies
$b_n\to\min\{H(Y),H(Y\mid X)+a\}$, and continuity of the square root proves~\eqref{eq:limit}.

Assume $H(Y)>0$. Since $a_n\geq0$, the privacy-only sequence $b_n$ tends to zero only if $H(Y\mid X)=0$ and $a_n\to0$; those conditions are also sufficient. With task compression, the information cap includes $\rho_n/n$ as a nonnegative branch, so $\rho_n/n\to0$ is sufficient for a vanishing certificate regardless of the other terms.

\subsection*{I. Proof of Corollary~\ref{cor:adaptive} and Composition Details}

The mutual-information chain rule gives
\begin{align}
 I(\Pi_K;X^n)
 &=\sum_{k=1}^KI(V_k;X^n\mid\Pi_{k-1})
 \leq\sum_{k=1}^Ke_{n,k}=E_{n,K}.                    \label{eq:app-adaptive-x}
\end{align}
If the task ledger is present, the identical calculation yields
$I(\Pi_K;Y^n)\leq R_{n,K}$. Because $W_K$ is generated from the transcript without further data access,
\begin{equation}
 I(W_K;X^n)\leq I(\Pi_K;X^n),\qquad
 I(W_K;Y^n)\leq I(\Pi_K;Y^n).                        \label{eq:app-adaptive-dpi}
\end{equation}
Apply Theorems~\ref{thm:main} and~\ref{thm:compression} with the cumulative budgets. This proves~\eqref{eq:adaptive-one} and its direct-task refinement.

For the synergy diagnostic, $V_1=R$ is independent of $B$ and $V_2=B\oplus R$ is uniform independently of $B$, so the two marginal informations vanish. Given $V_1=R$, however, $V_2$ is a bijective function of $B$ and
\begin{equation}
 I(V_2;B\mid V_1)=H(B\mid V_1)-H(B\mid V_1,V_2)=H(B). \label{eq:app-synergy}
\end{equation}
This also equals $I(V_1,V_2;B)$ and proves that marginal per-update quantities cannot replace the conditional ledger.

The schedule formulas follow from elementary sums. Constant costs give $e_0K$. For $0<\beta<1$,
\begin{equation}
 \sum_{k=1}^Ke_0\beta^{k-1}
 =e_0\frac{1-\beta^K}{1-\beta}\leq\frac{e_0}{1-\beta}.              \label{eq:app-geometric}
\end{equation}
The integral comparison gives $\sum_{k=1}^K1/k\leq1+\log K$, while the $p$-series converges for $p>1$.

Under~\eqref{eq:full-ledger}, conditional data processing gives
\begin{align}
 I(V_k;X^n\mid\Pi_{k-1})&\leq c_{n,k},\nonumber\\
 I(V_k;Y^n\mid\Pi_{k-1})&\leq c_{n,k}.               \label{eq:app-full-ledger}
\end{align}
The cumulative task branch $\sum_kc_{n,k}/n$ is no larger than the residual branch containing the same sum plus $H(Y\mid X)$. This proves~\eqref{eq:full-ledger-bound}.

\subsection*{J. Binary Calculations}

Uniformity of $X$ and independence of the Bernoulli noise imply that $Y=X\oplus N$ is uniform. Hence $H(Y)=\log2$. Given $X$, the map $N\mapsto Y$ is bijective, so
$H(Y\mid X)=H(N)=\hb(\theta)$. Hoeffding's lemma gives $\sigma^2=1/4$ for a $[0,1]$-valued loss. Substitution in~\eqref{eq:main-gen} proves~\eqref{eq:binary-bound}.

The quoted numerical values follow from direct evaluation of
$\hb(\theta)=-\theta\log\theta-(1-\theta)\log(1-\theta)$, with $0\log0=0$. Since
\begin{equation}
 \hb'(\theta)=\log\frac{1-\theta}{\theta}>0,qquad 0<\theta<1/2,      \label{eq:app-binary-derivative}
\end{equation}
the privacy-only residual certificate increases monotonically from zero to the entropy cap as task noise moves from zero to one half.

\subsection*{K. Proofs of Theorem~\ref{thm:transfer-tight} and Proposition~\ref{prop:gen-tight}}

\emph{Deterministic-task construction:}
Assume $Y=f(X)$ and $0<H(Y)<\infty$. Let $B\sim\operatorname{Bernoulli}(\alpha)$ be independent of $(X,Y)$, introduce an erasure symbol $\mathtt e\notin\mathcal Y$, and define
\begin{equation}
 U=\begin{cases}
 Y,&B=1,\\
 \mathtt e,&B=0.
 \end{cases}                                           \label{eq:app-erasure}
\end{equation}
Conditional on $U=\mathtt e$, the law of $Y$ remains $P_Y$, whereas a nonerased output determines $Y$. Therefore
\begin{align}
 H(Y\mid U)&=(1-\alpha)H(Y),\nonumber\\
 I(U;Y)&=\alpha H(Y).                                 \label{eq:app-erasure-iy}
\end{align}
The Markov chain $X-Y-U$ and the fact that $Y$ is deterministic given $X$ imply
\begin{equation}
 I(U;X)=I(U;Y)=\alpha H(Y).                           \label{eq:app-erasure-ix}
\end{equation}
For $0\leq\epsilon\leq H(Y)$, select
$\alpha=\epsilon/H(Y)$. Then
\begin{align}
 I(U;X)&=\epsilon,\nonumber\\
 I(U;Y)&=\epsilon
 =\min\{H(Y),H(Y\mid X)+\epsilon\}.                 \label{eq:app-erasure-tight}
\end{align}
For $\epsilon\geq H(Y)$, choose $U=Y$. This proves~\eqref{eq:det-tight}.

At blocklength $n$, the same idea may be applied to the entire task block. Erasing $Y^n$ with probability $1-\alpha_n$ gives
$I(U;X^n)=I(U;Y^n)=\alpha_n nH(Y)$. Choosing
$\alpha_n=\min\{1,\epsilon_n/[nH(Y)]\}$ attains the block cap in~\eqref{eq:main-info} whenever $Y^n$ is determined by $X^n$.

\emph{Independent-residual construction:}
Let $Y=(X,N)$ with $N\perp X$. If $H(X)=0$, taking $U=Y$ attains the entropy cap with zero leakage. Suppose henceforth that $H(X)>0$. For $0\leq\epsilon\leq H(X)$, set
$\alpha=\epsilon/H(X)$ and let $V$ equal $X$ with probability $\alpha$ and an erasure symbol otherwise, using a coin independent of $(X,N)$. Take $U=(N,V)$. Since $N$ is independent of $(X,V)$,
\begin{align}
 I(U;X)&=I(N,V;X)=I(V;X)=\alpha H(X)=\epsilon,\label{eq:app-residual-ix}\\
 I(U;Y)&=I(N,V;X,N)=H(N)+\alpha H(X)\nonumber\\
 &=H(Y\mid X)+\epsilon.                             \label{eq:app-residual-iy}
\end{align}
Here $H(Y)=H(X)+H(N)$, so this is precisely the unsaturated branch of~\eqref{eq:residual-tight}. If $\epsilon\geq H(X)=I(X;Y)$, choose $U=Y$; then $I(U;X)=H(X)\leq\epsilon$ and $I(U;Y)=H(Y)$. At blocklength $n$, always reveal $N^n$ and erase or reveal the entire $X^n$ block with probability
$\alpha_n=\min\{1,\epsilon_n/[nH(X)]\}$. The same calculation gives
\begin{equation}
 I(U;Y^n)=\min\{nH(Y),nH(Y\mid X)+\epsilon_n\},
\end{equation}
which completes the proof of Theorem~\ref{thm:transfer-tight}.

\emph{Noisy-majority construction:}
For odd $n$, $S_n$ is nonzero and symmetry makes $T_n=\operatorname{sign}(S_n)$ uniform. The Markov chain $Y^n-T_n-W$ and the fact that $T_n$ is a deterministic function of $Y^n$ give
\begin{align}
 I(W;Y^n)&=I(W;T_n)\nonumber\\
 &=\log2-\hb\!\left(\frac{1-\eta}{2}\right)=d(\eta). \label{eq:app-majority-info}
\end{align}
Because $X^n=Y^n$, this also equals $I(W;X^n)$. Moreover,
$L_P(w)=1/2$ and
\begin{equation}
 L_n(w)=\frac12-\frac{wS_n}{2n}.
\end{equation}
Since $\E[W\mid T_n]=\eta T_n$,
\begin{equation}
 \E[\gen]=\frac{\E[WS_n]}{2n}
 =\frac{\eta}{2n}\E|S_n|,
\end{equation}
proving~\eqref{eq:noisy-majority-gap}. Finally,
\begin{equation}
 d(\eta)=\frac{\eta^2}{2}+O(\eta^4),\qquad
 \E|S_n|=\sqrt{\frac{2n}{\pi}}\,(1+o(1)).            \label{eq:app-majority-asymptotics}
\end{equation}
Substitution yields~\eqref{eq:rate-tight}. For this bounded loss
$\sigma^2=1/4$ and $H(Y\mid X)=0$, so~\eqref{eq:main-gen} equals
$\sqrt{\epsilon_n/(2n)}$. Their ratio tends to $\sqrt{2/\pi}$, proving Proposition~\ref{prop:gen-tight}.
For the coordinatewise claim, exchangeability gives
\begin{equation}
 \E[WY_i]=\eta\E[T_nY_i]
 =\frac{\eta}{n}\E[T_nS_n]
 =\frac{\eta}{n}\E|S_n|=q_n.
\end{equation}
Both $W$ and $Y_i$ are uniform binary variables, so their mutual information is $d(q_n)$. The same holds with $X_i$ because $X_i=Y_i$. Since all coordinatewise privacy budgets equal $d(q_n)$ and $H(Y\mid X)=0$, the right-hand side of~\eqref{eq:individual} reduces to $\sqrt{d(q_n)/2}$. As $q_n=O(n^{-1/2})$ and $d(q_n)=q_n^2/2+O(q_n^4)$, this is $(1+o(1))q_n/2$, exactly the asymptotic value of~\eqref{eq:noisy-majority-gap}.

\subsection*{L. Nonuniform Coordinate Budgets}

The individual bound reveals how an uneven privacy profile changes the certificate. In the deterministic regime $H(Y\mid X)=0$, and before the entropy cap is reached at any coordinate,~\eqref{eq:individual} reduces to
\begin{equation}
 B_{\rm ind}(\epsilon_1,\ldots,\epsilon_n)
 =\frac{\sqrt{2\sigma^2}}{n}\sum_{i=1}^n\sqrt{\epsilon_i}.           \label{eq:app-nonuniform}
\end{equation}
Fix a total certified leakage $\sum_i\epsilon_i=E$. Cauchy--Schwarz yields
\begin{equation}
 B_{\rm ind}\leq\sqrt{\frac{2\sigma^2E}{n}},          \label{eq:app-nonuniform-upper}
\end{equation}
with equality when the coordinate budgets are equal. If the leakage is evenly distributed across only $m$ active records, with $\epsilon_i=E/m$ on those records and zero elsewhere, then
\begin{equation}
 B_{\rm ind}=\frac{\sqrt{2\sigma^2mE}}{n},            \label{eq:app-m-active}
\end{equation}
provided $E/m\leq H(Y)$. Thus the block reduction is the worst certificate among allocations with the same total in this unsaturated deterministic case; a concentrated influence profile can be certified more sharply by individual information.

For a general residual, the same ordering follows from concavity of
$g(u)=\sqrt{\hres_u(P_{XY})}$. Majorization therefore formalizes the comparison: making a fixed budget vector more uniform cannot decrease the sum in~\eqref{eq:individual}. Saturation at $I(X;Y)$ creates flat coordinates, so strictness may disappear after a budget crosses the entropy-cap threshold.

The posterior-variation result has a parallel interpretation. If $H(Y\mid X)=0$ and all $\delta_i=\delta_n$ remain below saturation,~\eqref{eq:tv-bound} becomes
\begin{equation}
 |\E[\gen]|
 \leq \delta_n\sqrt{\frac{2\sigma^2}{p_{\min}}}.     \label{eq:app-tv-rate}
\end{equation}
Consequently a shrinking posterior displacement $\delta_n=O(n^{-\gamma})$ gives an $O(n^{-\gamma})$ certificate. The dependence on $p_{\min}$ reflects the conversion from an absolute posterior perturbation to relative entropy near rare prior symbols.

\subsection*{M. Polynomial Privacy and Update Regimes}

The rate classification can be made explicit for polynomial budgets. Suppose $H(Y\mid X)=0$ and
\begin{equation}
 \epsilon_n=c n^\alpha,qquad c>0.                    \label{eq:app-poly-budget}
\end{equation}
Before the entropy cap becomes active,
\begin{equation}
 |\E[\gen]|
 \leq\sqrt{2\sigma^2c}\,n^{(\alpha-1)/2}.            \label{eq:app-poly-rate}
\end{equation}
Thus $\alpha=0$ gives the parametric $n^{-1/2}$ rate, $0<\alpha<1$ gives slower convergence, and $\alpha=1$ produces a constant certificate. If $\alpha>1$, the entropy cap is eventually active and the privacy route carries no asymptotic rate information. When $H(Y\mid X)>0$, the same polynomial term approaches zero for $\alpha<1$, but the certificate approaches the residual floor.

Direct task compression obeys the identical classification with $\rho_n$ replacing $\epsilon_n$ and without a residual. If
$\rho_n=d n^\beta$ for $0\leq\beta<1$, the task route has order
$n^{(\beta-1)/2}$. The better asymptotic exponent between a deterministic privacy route and task compression is selected by the minimum.

For adaptive learning with constant conditional cost $e_0$ and a horizon $K_n=n^\kappa$, the deterministic-task version of~\eqref{eq:adaptive-one} gives
\begin{equation}
 |\E[\gen(W_{K_n},Y^n)]|
 \leq\sqrt{2\sigma^2e_0}\,n^{(\kappa-1)/2}.           \label{eq:app-update-rate}
\end{equation}
The certificate vanishes for $\kappa<1$, remains constant at $\kappa=1$, and reaches the entropy cap for sufficiently rapid growth. These statements assume that every visible component of each update is included in the conditional cost.

\subsection*{N. Conditional Data Processing Used in the Ledger}

The post-processing step in~\eqref{eq:app-adaptive-dpi} can be verified directly. Let $R_0$ be the additional random seed used to produce $W_K$ from $\Pi_K$, with
$R_0\perp(X^n,Y^n)\mid\Pi_K$. Then
\begin{align}
 I(W_K;X^n)
 &\leq I(W_K,\Pi_K;X^n)\nonumber\\
 &=I(\Pi_K;X^n)+I(W_K;X^n\mid\Pi_K)\nonumber\\
 &=I(\Pi_K;X^n).                                     \label{eq:app-postprocess-x}
\end{align}
The last term vanishes because $W_K$ is generated from $(\Pi_K,R_0)$ without new data access. The same calculation applies to $Y^n$. It also explains the scope of the corollary: if the final post-processing queries the sample again, its conditional information cost must be represented by another transcript component before data processing can be invoked.

Conditional mutual information in the ledger is evaluated under the actual joint law induced by earlier adaptive updates. The deterministic numbers $e_{n,k}$ and $r_{n,k}$ are upper bounds on those law-dependent quantities. The update kernels may depend arbitrarily on the realized transcript; no independence between releases is needed.

\subsection*{O. Verification of the Full-Sample Identity}

The warning in~\eqref{eq:full-sample-identity} follows from the chain rule:
\begin{equation}
 I(W;X^n,Y^n)=I(W;X^n)+I(W;Y^n\mid X^n).              \label{eq:app-full-identity}
\end{equation}
The second term is nonnegative and can equal $H(Y^n\mid X^n)$. In the obstruction construction $W=N^n$ and $Y^n=(X^n,N^n)$, it is exactly $nH(N)$ while the first term is zero. Hence a bound on $I(W;X^n)$ cannot be used as a bound on full-sample information unless an additional condition controls the conditional task channel.

Conversely, a full-sample budget $I(W;X^n,Y^n)\leq c_n$ implies both
$I(W;X^n)\leq c_n$ and $I(W;Y^n)\leq c_n$ by data processing. This asymmetry is why full-sample compression is sufficient for selective privacy and task compression at the information level, while selective privacy alone is not sufficient for full-sample stability.

\subsection*{P. Proof of Corollary~\ref{cor:inverse-design}}

Write $\tau_\gamma=\gamma^2/(2\sigma^2)$. Since
$\tau_\gamma<H(Y)$, the entropy branch cannot by itself make the right-hand side of~\eqref{eq:main-gen} at most $\gamma$. Squaring that bound and dividing by $2\sigma^2$ shows that the privacy branch certifies the target exactly when
\begin{equation}
 H(Y\mid X)+\frac{\epsilon_n}{n}\leq\tau_\gamma.    \label{eq:app-inverse-one}
\end{equation}
Because $\epsilon_n$ is nonnegative,~\eqref{eq:app-inverse-one} is feasible only if $H(Y\mid X)\leq\tau_\gamma$; rearrangement then gives~\eqref{eq:inverse-privacy}.

Under~\eqref{eq:task-rate}, Theorem~\ref{thm:compression} replaces the information term by the minimum of the privacy branch and $\rho_n/n$. A minimum of two real numbers is at most $\tau_\gamma$ exactly when at least one of them is at most $\tau_\gamma$. The second alternative is precisely~\eqref{eq:inverse-task}.

Finally, set $\epsilon_n=\epsilon$. When
$\tau_\gamma>H(Y\mid X)$, solving~\eqref{eq:app-inverse-one} for the positive integer $n$ gives~\eqref{eq:inverse-sample}. If
$\tau_\gamma<H(Y\mid X)$, even zero leakage leaves the privacy branch above the target. At equality, a strictly positive fixed leakage also prevents certification for every finite $n$. This proves the stated impossibility and completes the inversion.

%\clearpage

\end{document}